\documentclass[12pt]{article}
\usepackage[margin=1in]{geometry}
\usepackage{amsmath,amssymb,amsthm,mathtools,bm}
\usepackage{booktabs}
\usepackage{graphicx}
\usepackage{enumitem}
\usepackage{microtype}
\usepackage{natbib}
\usepackage[colorlinks=true,allcolors=blue]{hyperref}
\usepackage{cleveref}

\newtheorem{theorem}{Theorem}[section]
\newtheorem{proposition}[theorem]{Proposition}
\newtheorem{corollary}[theorem]{Corollary}
\newtheorem{example}[theorem]{Example}

\theoremstyle{definition}
\newtheorem{definition}[theorem]{Definition}
\newtheorem{remark}[theorem]{Remark}

\newcommand{\E}{\mathbb E}
\newcommand{\Pp}{\mathbb P}
\newcommand{\R}{\mathbb R}
\newcommand{\Var}{\operatorname{Var}}
\newcommand{\Cov}{\operatorname{Cov}}
\newcommand{\cum}{\operatorname{cum}}
\newcommand{\diag}{\operatorname{diag}}
\newcommand{\LS}{\mathrm{LS}}

\newcommand{\trans}{\mathsf T}
\newcommand{\dd}{\mathrm d}

\title{Linear Regression Beyond Covariance:\\
Latent Slope Spectra from Higher-Order Cumulants$^\ast$}
\author{Anirban Nath and Ming Yuan\\Columbia University}
\date{(\today)}

\begin{document}
\maketitle

\footnotetext[1]{
Address for correspondence: Department of Statistics, Columbia University, 1255 Amsterdam Avenue, New York, NY 10027. Email: ming.yuan@columbia.edu.}

\begin{abstract}
Least squares summarizes the joint distribution of a response and regressor through second-order information. When an observed regressor is composed of latent components carrying different relationships with the response, this summary can average over distinct linear effects. We study the additional linear information contained in higher-order mixed cumulants and formalize it through the \emph{latent slope spectrum}. Ordinary least squares is a variance-weighted barycenter of this spectrum, whereas higher-order cumulants can separate its components when the spectrum is sufficiently parsimonious. In the scalar one-confounder model, fourth-order cumulants identify the unordered pair $\{\beta,\beta+\gamma\}$ through a $2\times2$ linear system and a quadratic equation. We develop estimation and root-$n$ inference under regular separation and show that detection and estimation diverge near degeneracy: slope multiplicity remains detectable on an $n^{-1/4}$ local scale even when the individual slopes are not uniformly regularly estimable. In multiple regression, a quadratic tensor recurrence yields a test of one versus multiple effective coefficient vectors that does not require the vectors themselves to be identified; multiplicity can therefore be consistently detectable even when the recurrence operator is noninjective. The same quadratic diagnostic can retain signal in directions where a first-order higher-cumulant restriction cancels exactly. The results distinguish representation, identification, detection, and estimation of latent linear heterogeneity, while making explicit the limits of structural and causal interpretation from observational data alone.
\end{abstract}

\section{Introduction}

Linear regression is commonly described either as a structural model or as a functional of the joint distribution of a response $Y$ and covariates $X$. Under a structural specification,
\begin{equation*}
Y=X^\trans\beta+\varepsilon,
\qquad \varepsilon\ \text{independent of }X,
\end{equation*}
the coefficient $\beta$ is a parameter of a data-generating mechanism. Under a distribution-free interpretation, by contrast, the population least-squares coefficient is the functional
\begin{equation*}
\beta_{\LS}=\Var(X)^{-1}\Cov(X,Y),
\end{equation*}
where, for simplicity, we assume that all variables are centered. This interpretation is attractive because it is always defined under mild moment assumptions, but it also makes clear what information least squares discards: $\beta_{\LS}$ depends only on the covariance matrix of $(X,Y)$.

This observation motivates three questions that run through the paper.  \emph{What does a single regression coefficient summarize when an observed regressor aggregates components carrying different slopes?}  \emph{What additional linear structure, if any, is present in the joint distribution beyond covariance?}  And, if such structure exists, \emph{which parts of it are representable, identifiable, detectable, and estimable from data?}  These questions are deliberately broader than causal identification: they concern the statistical information in the observed law before one assigns structural labels to it.

This reduction is consequential when the observed regressor contains latent components that have different relationships with the response. To fix ideas, consider the scalar model:
\begin{equation}
X=H+\nu,
\qquad
Y=\beta X+\gamma H+\varepsilon
=(\beta+\gamma)H+\beta\nu+\varepsilon,
\label{eq:one-confounder-intro}
\end{equation}
where $H$, $\nu$, and $\varepsilon$ are mutually independent, or more generally independent to a prescribed finite order as defined below. If second moments exist, then
\begin{equation*}
\beta_{\LS}
=\frac{\Var(H)(\beta+\gamma)+\Var(\nu)\beta}
{\Var(H)+\Var(\nu)}.
\end{equation*}
Thus covariance sees only one variance-weighted average of two distinct slopes $\{\beta, \beta+\gamma\}$. In the structural interpretation, the discrepancy between this barycenter and $\beta$ is the familiar omitted-variable bias. The central observation of this paper is that higher-order mixed cumulants can separate the two latent slopes that covariance averages together.

For an integer $m\ge2$, define
\begin{equation*}
z_k^{(m)}
=\cum(\underbrace{X,\ldots,X}_{m-k},\underbrace{Y,\ldots,Y}_{k}),
\qquad k=0,\ldots,m-1,
\end{equation*}
$m$th-order mixed cumulants that contain at least one regressor argument. Under \eqref{eq:one-confounder-intro}, multilinearity and finite-order independence give
\begin{equation*}
z_k^{(m)}
=\kappa_m(H)(\beta+\gamma)^k+\kappa_m(\nu)\beta^k,
\end{equation*}
where $\kappa_m(\cdot)$ denotes the $m$th-order cumulant. The sequence $k\mapsto z_k^{(m)}$ is therefore a two-exponential sequence. Equivalently, it is the moment sequence of the signed two-atomic measure
\begin{equation*}
\mu_m=\kappa_m(H)\,\delta_{\beta+\gamma}+\kappa_m(\nu)\,\delta_\beta.
\end{equation*}
The support $\{\beta,\beta+\gamma\}$ is the simplest example of what we call the \emph{latent slope spectrum}. It is invisible to covariance alone.

The latent slope spectrum is not proposed as a replacement for the least-squares coefficient.  The two objects answer different questions.  Least squares remains the optimal linear predictor under squared error, while the spectrum describes heterogeneity in linear relationships carried by latent components of the regressor.  Much of the paper is about when the second question is statistically meaningful rather than merely algebraically representable.

In particular, for fourth-order cumulants, writing $z_k=z_k^{(4)}$, the sequence satisfies
\begin{equation*}
z_{k+2}=c_1z_{k+1}+c_0z_k,\qquad k=0,1,
\end{equation*}
where $c_1=2\beta+\gamma$ and $c_0=-\beta(\beta+\gamma)$. Hence
\begin{equation*}
\begin{pmatrix}z_0&z_1\\z_1&z_2\end{pmatrix}
\begin{pmatrix}c_0\\c_1\end{pmatrix}
=\begin{pmatrix}z_2\\z_3\end{pmatrix},
\end{equation*}
and the two latent slopes are the roots of $t^2-c_1t-c_0=0$. Moreover,
\begin{equation*}
z_0z_2-z_1^2=\kappa_4(H)\kappa_4(\nu)\gamma^2.
\end{equation*}
Thus the fourth-order problem becomes degenerate precisely when the two slopes coalesce or when one latent component is invisible to fourth-order cumulants. This identity yields a direct diagnostic for one versus two effective fourth-order slopes and, under the latent omitted-variable model, partially characterizes omitted-variable bias. It does not by itself orient the two structural labels.

The paper develops this perspective in four directions.  \emph{First}, we introduce the latent slope spectrum and show that ordinary least squares is its variance-weighted barycenter.  We also establish a saturation result: at a fixed cumulant order, sufficiently many slope atoms can interpolate an arbitrary cumulant tensor.  Mere existence of a latent-slope representation is therefore weak evidence; the substantive statistical content lies in a parsimonious representation whose spectrum can be learned from the observed law.  \emph{Second}, we establish finite-order identification through Hankel matrices and annihilating polynomials.  Among the order-$m$ mixed cumulants containing at least one regressor argument, up to $\lfloor m/2\rfloor$ distinct effective slopes can be resolved under nondegeneracy.  In the one-confounder model this identifies the unordered pair $\{\beta,\beta+\gamma\}$, but not its structural labels, so signed omitted-variable bias can be partially identified without additional orientation information.  The Hankel and annihilating-polynomial algebra is a form of the classical Prony method for finite exponential sums; see, for example, \citet{potts2010}.  Our contribution is to connect this algebra to random-design regression and to the statistical distinction among representation, identification, detection, and estimation.  \emph{Third}, for the practically important fourth-order two-slope case, we develop an explicit estimator requiring only a $2\times2$ linear system and a quadratic equation, derive regular root-$n$ inference, and characterize a weak-identification regime in which a determinant diagnostic detects $n^{-1/4}$-scale slope separation even though the individual roots are not uniformly regularly estimable.  This route avoids direct low-rank approximation of higher-order tensors, which can be ill posed \citep{desilva2008}.  \emph{Fourth}, we extend the recurrence to multiple regression and show that testing one versus multiple effective coefficient vectors is strictly weaker than identifying the vectors themselves.  A quadratic tensor diagnostic can be calibrated without injectivity of the recurrence operator, has an $n^{-1/4}$ local separation scale, and can remain informative in directions where a first-order higher-cumulant diagnostic cancels exactly.  Under the latent omitted-variable model, absence of omitted-variable bias implies these quadratic restrictions, while the resulting test is consistent against alternatives carrying a visible quadratic fourth-order signal.  As a further extension, we show that cumulants of several orders can share a common annihilating polynomial, which can repair weak or absent information at a particular order.

Our framework is related to several strands of work. The distribution-free interpretation of least squares is emphasized by \citet{buja2019} and has close connections to the econometric perspective of \citet{white1982, white1994, angrist2009}. Methods for hidden confounding often introduce additional observed information, structural restrictions, or low-dimensional latent-factor assumptions; examples include instrumental-variable methods \citep{bowden1990}, spectral deconfounding \citep{cevid2020}, low-complexity latent-cause methods \citep{janzing2011}, and multiple-cause models \citep{wangblei2019}. A different line exploits non-Gaussianity and higher moments for causal identification or goodness-of-fit, including \citet{hoyer2008}, \citet{wangdrton2023}, cumulant-based latent-confounder recovery by \citet{cai2023}, higher-moment goodness-of-fit tests for structural equation models by \citet{schkoda2025}, and higher-order identification in latent-variable LiNGAM by \citet{tramontano2025}.

The closest regression-based comparison is higher-order least squares (HOLS) of \citet{schultheiss2024}. HOLS asks whether a single regression coefficient satisfies additional higher-order residual restrictions. Here we ask what population structure is revealed when such a one-coefficient description fails: mixed cumulants can encode a finite spectrum of latent slopes whose support is identifiable under a parsimonious latent-linear representation. This leads to questions that do not arise in a one-coefficient goodness-of-fit formulation, including how many slopes a cumulant order can resolve, partial identification of structural labels, weak separation of coalescing slopes, and, in multiple regression, testing coefficient multiplicity even when the individual coefficient vectors are not identified. The linear restriction $C(b)$ introduced below is closest in spirit to a first-order higher-moment residual diagnostic, whereas the quadratic restrictions $(M_2,M_3)$ arise from the two-coefficient recurrence and are not redundant with that first-order information: we exhibit a balanced design in which $C(b)$ vanishes identically for every coefficient separation while the quadratic signal remains nonzero.

Accordingly, we regard the latent slope spectrum as an identified population feature only within the stated low-complexity latent-linear model, rather than as a universally defined functional of $P_{XY}$. The saturation results below make this qualification explicit: sufficiently rich algebraic slope representations exist very generally, and substantive statistical content comes from parsimony and identifiability.

The perspective is therefore deliberately intermediate between a purely predictive and a fully structural view of regression.  We do not argue that every statistically distinguishable latent slope should receive a causal interpretation, nor that higher-order information should routinely replace second-order summaries.  Rather, the framework makes visible a choice that is often implicit: whether the target of regression is a single second-order projection or a richer description of linear heterogeneity in the observed law.  The distinction becomes especially consequential when multiplicity is detectable even though its individual components cannot be stably estimated or identified.

The remainder of the paper is organized as follows. \Cref{sec:setup} introduces finite-order independence and the latent slope spectrum. \Cref{sec:identification} gives the Hankel-matrix identification theory and implications for omitted-variable bias. \Cref{sec:fourth} develops fourth-order estimation and inference, including a one-versus-two-slope diagnostic and local alternatives. \Cref{sec:multiorder} presents a brief cross-order extension for settings in which a component is weak or invisible at one cumulant order. \Cref{sec:multivariate} develops the multivariate recurrence, tests of one versus multiple effective coefficient vectors without full identification, and the stronger injective identification result. \Cref{sec:simulation} reports numerical experiments, and \Cref{sec:discussion} concludes. Proofs are collected in the appendices.

\section{Finite-order independence and latent slope spectra}
\label{sec:setup}

Let $W_1,\ldots,W_r$ be random variables with sufficiently many moments. Their joint cumulant is given by 
\[
\cum(W_1,\ldots,W_r)=\sum_\pi (|\pi|-1)!(-1)^{|\pi|-1}\prod_{B\in\pi}{\mathbb E}\left(\prod_{i\in B}W_i\right),
\]
where the sum runs over all partitions $\pi$ of $[r]$. In particular, we write $\kappa_r(W)=\cum(W,\ldots,W)$. For background on cumulants and their tensor representation, see \citet{mccullagh2018tensor}. We use two basic facts: cumulants are multilinear, and a joint cumulant vanishes when its arguments can be partitioned into two independent groups.

\begin{definition}[Finite-order independence]
\label{def:m-independence}
Random elements $U_1,\ldots,U_s$ are \emph{jointly independent through order $m$}, written $U_1\perp_m\cdots\perp_mU_s$, if every mixed joint cumulant of total order at most $m$ that contains arguments from at least two distinct groups vanishes.
\end{definition}

When moments through order $m$ exist, \Cref{def:m-independence} is equivalent to factorization of mixed moments through total degree $m$. More specifically, $U_1\perp_m U_2$ if and only if, for any nonnegative integers $m_1,m_2$ such that $m_1+m_2\le m$ and any pair of polynomials $P_{m_1}$ and $P_{m_2}$ of degrees at most $m_1$ and $m_2$, respectively, $P_{m_1}(U_1)$ and $P_{m_2}(U_2)$ are uncorrelated:
\[
{\mathbb E}[P_{m_1}(U_1)P_{m_2}(U_2)]={\mathbb E}[P_{m_1}(U_1)]{\mathbb E}[P_{m_2}(U_2)].
\]
Such a polynomial-uncorrelated formulation interpolates between uncorrelatedness ($m=2$) and full independence. We opt for the cumulant formulation because it makes the algebra immediate.

Consider a scalar latent linear model:
\begin{equation}
X = \sum_{j=1}^r U_j,
\qquad
Y = \sum_{j=1}^r \theta_j U_j + \varepsilon,
\label{eq:scalar_latent_model}
\end{equation}
where $U_1,\ldots,U_r,\varepsilon$ are jointly independent through order $m$. Note that for a fixed order $m$, only components with a nonzero $m$th cumulant
are visible to the order-$m$ mixed cumulants.

\begin{definition}[Order-$m$ latent slope spectrum]
\label{def:latent_slope_spectrum}
Under \eqref{eq:scalar_latent_model}, the order-$m$ latent slope
spectrum is
\[
\Theta_m
=
\bigl\{
\theta_j:
\kappa_m(U_j)\neq 0
\bigr\},
\]
with repeated values counted only once.  For brevity, we shall omit the subscript $m$ whenever it is clear from the context.
\end{definition}

The one-confounder model
\[
X = H+\nu,
\qquad
Y = \beta X+\gamma H+\varepsilon
= (\beta+\gamma)H+\beta\nu+\varepsilon
\]
is obtained by taking
\[
U_1=H,\qquad \theta_1=\beta+\gamma,
\qquad
U_2=\nu,\qquad \theta_2=\beta.
\]
More generally, with several scalar omitted components,
\begin{equation}
X=\nu+\sum_{j=1}^q H_j,
\qquad
Y=\beta\nu+\sum_{j=1}^q(\beta+\gamma_j)H_j+\varepsilon,
\label{eq:omitted_variables}
\end{equation}
so that $r=q+1$ and the associated slopes are
\begin{equation}
\Theta
=
\{\beta,\beta+\gamma_1,\ldots,\beta+\gamma_q\}.
\label{eq:latent_slope_set}
\end{equation}

For $k=0,\ldots,m-1$, recall that
\begin{equation*}
z_k^{(m)}
=
\operatorname{cum}
\bigl(
\underbrace{X,\ldots,X}_{m-k},
\underbrace{Y,\ldots,Y}_{k}
\bigr).
\end{equation*}

\begin{theorem}[Finite-exponential representation]
\label{thm:finite_exponential}
Suppose \eqref{eq:scalar_latent_model} holds and
$U_1,\ldots,U_r,\varepsilon$ are jointly independent through order
$m$. Then
\begin{equation}
	z_k^{(m)}
	=
	\sum_{j=1}^r
	a_j^{(m)}\theta_j^k,
	\qquad
	a_j^{(m)}=\kappa_m(U_j),
	\qquad
	k=0,\ldots,m-1.
	\label{eq:finite_exponential}
\end{equation}
Equivalently,
$\{z_k^{(m)}\}_{k=0}^{m-1}$ is the truncated moment sequence of the
signed discrete measure
\begin{equation*}
	\mu_m
	=
	\sum_{j=1}^r
	a_j^{(m)}\delta_{\theta_j}.
\end{equation*}
\end{theorem}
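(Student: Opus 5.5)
The plan is to expand each cumulant by multilinearity and then discard every mixed term using the finite-order independence assumption. I would substitute $X=\sum_{j=1}^r U_j$ and $Y=\sum_{j=1}^r\theta_jU_j+\varepsilon$ into $z_k^{(m)}=\cum(X,\ldots,X,Y,\ldots,Y)$, with $m-k$ copies of $X$ and $k$ copies of $Y$. It is convenient to set $U_0=\varepsilon$ and $\theta_0=1$ in the $Y$ slots, while the coefficient of $U_0$ in $X$ is zero. Multilinearity of the joint cumulant in each of its $m$ arguments then gives
\[
z_k^{(m)}=\sum_{j_1,\ldots,j_m}\Bigl(\prod_{i=1}^{m-k}\mathbf 1\{j_i\ge1\}\Bigr)\Bigl(\prod_{i=m-k+1}^{m}\theta_{j_i}\Bigr)\cum(U_{j_1},\ldots,U_{j_m}).
\]
Here the indices in the first $m-k$ slots range over $\{1,\ldots,r\}$ and those in the last $k$ slots over $\{0,\ldots,r\}$. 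Moment assumptions of order $m$ are implicit, so each term is finite.

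Next I would apply \Cref{def:m-independence}. Any term whose index tuple $(j_1,\ldots,j_m)$ takes at least two distinct values is a mixed joint cumulant of total order $m$ with arguments from at least two distinct groups among $U_1,\ldots,U_r,\varepsilon$, so it vanishes. Only the diagonal tuples $j_1=\cdots=j_m=j$ survive. Because $k\le m-1$, there is at least one $X$ slot, and this forces $j\ge1$. In particular the tuple with all indices equal to $0$ is excluded, which is precisely why $\varepsilon$ contributes nothing; this is the role of the restriction $k\le m-1$. The surviving term for a given $j$ equals $\theta_j^k\kappa_m(U_j)$, which is \eqref{eq:finite_exponential} with $a_j^{(m)}=\kappa_m(U_j)$.

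The measure reformulation is then immediate: $\int t^k\,\dd\mu_m(t)=\sum_j a_j^{(m)}\theta_j^k$ for $k=0,\ldots,m-1$. If several $\theta_j$ coincide, their weights simply add, and atoms with $\kappa_m(U_j)=0$ carry zero mass. The support of $\mu_m$ is therefore contained in $\Theta_m$. It need not equal $\Theta_m$, because aggregated weights at a repeated slope could cancel.

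There is no real obstacle. The only care needed is the bookkeeping that separates the $X$ slots from the $Y$ slots, so that the noise term $\varepsilon$ is correctly excluded, and checking that the mixed terms are of total order exactly $m$, so that independence through order $m$ suffices.
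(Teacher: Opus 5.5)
Your proposal is correct and matches the paper's proof: expand by multilinearity, use finite-order independence to kill every term mixing two groups, and use $k\le m-1$ to exclude the pure-$\varepsilon$ term, leaving $\kappa_m(U_j)\theta_j^k$. Your explicit index bookkeeping and the remark that coinciding slopes can make the support of $\mu_m$ a proper subset of $\Theta_m$ are accurate refinements of the same argument.
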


Theorem~\ref{thm:finite_exponential} is the basic algebraic reduction
used throughout the paper.  A single-slope regression produces a
geometric sequence: if
\[
Y=\theta X+\varepsilon,
\qquad
X\perp_m\varepsilon,
\]
then
\[
z_k^{(m)}=\kappa_m(X)\theta^k.
\]
Multiple effective slopes therefore appear as a finite mixture of
geometric sequences.

\paragraph{Second order is universally representable.}
At order two, e.g., $m=2$, the latent linear model imposes no restriction beyond
existence of second moments.  This is simply the population
least-squares decomposition.

Suppose $(X,Y)$ has finite second moments and
$\operatorname{Var}(X)>0$.  Define
\[
\theta_{\mathrm{LS}}
=
\frac{\operatorname{Cov}(X,Y)}
{\operatorname{Var}(X)},
\qquad
U_1=X,
\qquad
\varepsilon
=
Y-\theta_{\mathrm{LS}}X.
\]
Then
\[
X=U_1,
\qquad
Y=\theta_{\mathrm{LS}}U_1+\varepsilon,
\]
and
\[
U_1\perp_2\varepsilon.
\]
Consequently every nondegenerate pair $(X,Y)$ with finite second
moments admits the structural representation
\eqref{eq:scalar_latent_model} with $m=2$ and $r=1$.

More generally, least squares collapses a latent spectrum to its variance-weighted barycenter, as the following result indicates.
\begin{proposition}[OLS barycenter]
\label{prop:ols-barycenter}
Suppose \eqref{eq:scalar_latent_model} holds with pairwise uncorrelated $U_1,\ldots,U_r,\varepsilon$, and let $\sigma_j^2=\Var(U_j)>0$. Then
\begin{equation}
	\beta_{\LS}=\sum_{j=1}^r\omega_j\theta_j,
	\quad {\rm where} \quad
	\omega_j=\frac{\sigma_j^2}{\sum_{\ell=1}^r\sigma_\ell^2}.
	\label{eq:ols-general-barycenter}
\end{equation}
In particular,
\begin{equation}
	\min_j\theta_j\le\beta_{\LS}\le\max_j\theta_j.
	\label{eq:convex-hull}
\end{equation}
\end{proposition}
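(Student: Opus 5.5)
The plan is to compute the two second-order quantities in $\beta_{\LS}=\Cov(X,Y)/\Var(X)$ directly from \eqref{eq:scalar_latent_model}, using bilinearity of covariance together with pairwise uncorrelatedness. This is the $m=2$ analogue of \Cref{thm:finite_exponential}: the proof of that theorem uses joint independence through order $m$, and at order two that condition reduces exactly to pairwise uncorrelatedness. So the only input needed is that every cross-covariance $\Cov(U_j,U_\ell)$ with $j\neq\ell$, and every $\Cov(U_j,\varepsilon)$, vanishes.

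First, expand $\Var(X)=\Var\bigl(\sum_j U_j\bigr)=\sum_{j,\ell}\Cov(U_j,U_\ell)$. The off-diagonal terms vanish, leaving $\sum_\ell\sigma_\ell^2$. This is strictly positive because each $\sigma_j^2>0$, so the ratio defining $\beta_{\LS}$ is well defined.

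Second, expand
\[
\Cov(X,Y)=\sum_{j,\ell}\theta_\ell\Cov(U_j,U_\ell)+\sum_j\Cov(U_j,\varepsilon).
\]
By uncorrelatedness this collapses to $\sum_j\theta_j\sigma_j^2$. Dividing the second expansion by the first gives \eqref{eq:ols-general-barycenter}, with $\omega_j=\sigma_j^2/\sum_\ell\sigma_\ell^2$.

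For \eqref{eq:convex-hull}, the weights satisfy $\omega_j>0$ and $\sum_j\omega_j=1$, so $\beta_{\LS}$ is a convex combination of the $\theta_j$. Therefore
\[
\min_j\theta_j=\sum_j\omega_j\min_\ell\theta_\ell\le\sum_j\omega_j\theta_j\le\sum_j\omega_j\max_\ell\theta_\ell=\max_j\theta_j.
\]

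There is no real obstacle here. The only point needing care is making explicit that the hypothesis includes uncorrelatedness of $\varepsilon$ with each $U_j$, so that the noise drops out of $\Cov(X,Y)$.
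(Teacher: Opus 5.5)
Your proof is correct and matches the paper's argument: pairwise uncorrelatedness reduces $\Var(X)$ and $\Cov(X,Y)$ to $\sum_j\sigma_j^2$ and $\sum_j\theta_j\sigma_j^2$, and the convex-hull bound follows because the weights are nonnegative and sum to one. Your remark that order-two finite independence is exactly pairwise uncorrelatedness, so that this is the $m=2$ case of \Cref{thm:finite_exponential}, is a valid consistency check but is not needed for the proof.
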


In the two-slope case, once $\theta_-=\min\{\theta_1,\theta_2\}$ and $\theta_+=\max\{\theta_1,\theta_2\}$ are known, the second-order variance weights are identified from $\beta_{\LS}$:
\begin{equation*}
\omega_+=\frac{\beta_{\LS}-\theta_-}{\theta_+-\theta_-},
\qquad \omega_-=1-\omega_+.
\end{equation*}
Multiplying by $\Var(X)$ recovers the two latent variance contributions indexed by their slopes.

While least squares may be viewed as the unique one-slope
second-order representation obtained by requiring the residual to be
second-order independent of the regressor, the situation changes at
higher orders: a low-dimensional representation becomes restrictive,
but a sufficiently rich representation is again able to interpolate
arbitrary cumulant information.

\paragraph{Saturation at a fixed cumulant order.}
Let
\[
K_m(X,Y)
=
\operatorname{cum}_m\bigl((X,Y)^\top\bigr)
\]
denote the symmetric order-$m$ joint cumulant tensor, and write
\[
v(\theta)
=
\begin{pmatrix}
1\\ \theta
\end{pmatrix},
\qquad
e_2
=
\begin{pmatrix}
0\\ 1
\end{pmatrix}.
\]

The next result shows that, if enough slope components are allowed, a
single cumulant order places no restriction on the order-$m$ cumulant
tensor.

\begin{proposition}[Order-$m$ cumulant saturation]
\label{prop:order_m_saturation}
Suppose $(X,Y)$ has finite $m$th moments.  Fix any $m$ distinct real
numbers
\[
\vartheta_1,\ldots,\vartheta_m.
\]
Then there exist unique real numbers
$a_1^{(m)},\ldots,a_m^{(m)}$ such that
\begin{equation*}
z_k^{(m)}
=
\sum_{j=1}^m
a_j^{(m)}\vartheta_j^k,
\qquad
k=0,\ldots,m-1.
\end{equation*}
Furthermore, the complete order-$m$ cumulant tensor satisfies
\begin{equation*}
K_m(X,Y)
=
\sum_{j=1}^m
a_j^{(m)}
v(\vartheta_j)^{\otimes m}
+
b_m e_2^{\otimes m},
\end{equation*}
where
\begin{equation*}
b_m
=
\kappa_m(Y)
-
\sum_{j=1}^m
a_j^{(m)}\vartheta_j^m.
\end{equation*}
\end{proposition}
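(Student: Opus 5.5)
The proof splits into two steps: existence and uniqueness of the coefficients $a_j^{(m)}$ from a Vandermonde system, and then the tensor identity by comparing entries.

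\textbf{Step 1 (coefficients).} The $m$ equations $z_k^{(m)}=\sum_{j=1}^m a_j^{(m)}\vartheta_j^k$, $k=0,\ldots,m-1$, form a square linear system in $(a_1^{(m)},\ldots,a_m^{(m)})$. Its coefficient matrix is the Vandermonde matrix $V=(\vartheta_j^{k})_{0\le k\le m-1,\,1\le j\le m}$. We have $\det V=\prod_{i<j}(\vartheta_j-\vartheta_i)\neq0$ because the nodes are distinct. Finite $m$th moments make every $z_k^{(m)}$ a well-defined real number, so the system has the unique real solution $a^{(m)}=V^{-1}z^{(m)}$.

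\textbf{Step 2 (tensor identity).} We compare the two sides entrywise. Both sides are symmetric order-$m$ tensors on $\R^2$. A symmetric tensor is determined by its entries indexed by the number $k\in\{0,\ldots,m\}$ of indices equal to $2$, that is, by the number of $Y$ arguments. For the left side, symmetry of joint cumulants shows that the entry with $k$ indices equal to $2$ is $\cum(X,\ldots,X,Y,\ldots,Y)$ with $m-k$ copies of $X$. This equals $z_k^{(m)}$ for $k\le m-1$ and $\kappa_m(Y)$ for $k=m$.

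For the right side, the corresponding entry of $v(\vartheta)^{\otimes m}$ is $1^{m-k}\vartheta^k=\vartheta^k$. The entry of $e_2^{\otimes m}$ is $1$ if $k=m$ and $0$ otherwise. So for $k\le m-1$ the right side gives $\sum_j a_j^{(m)}\vartheta_j^k=z_k^{(m)}$ by Step~1. For $k=m$ it gives $\sum_j a_j^{(m)}\vartheta_j^m+b_m$, which equals $\kappa_m(Y)$ by the definition of $b_m$. All entries agree, which proves the identity.

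No step is a real obstacle. The only care needed is the bookkeeping in Step 2: one must note that symmetric tensors on $\R^2$ are determined by the $m+1$ counts $k$. Also, the pure-$Y$ entry is not among the $z_k^{(m)}$, which is why the correction term $b_m e_2^{\otimes m}$ is needed. Uniqueness applies only to the $a_j^{(m)}$ for the fixed nodes, and it follows directly from the invertibility of $V$.
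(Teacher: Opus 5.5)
Your proof is correct and follows the same route as the paper: a nonsingular square Vandermonde system gives unique weights $a_j^{(m)}$, and $b_m$ is chosen to match the one remaining pure-$Y$ coordinate of the symmetric tensor. Your entrywise bookkeeping by the number $k$ of $Y$-indices spells out explicitly what the paper's short proof leaves implicit.
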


\Cref{prop:order_m_saturation} indicates that $m$ slope components, together with a response-only component, always suffice to represent an arbitrary order-$m$ cumulant tensor. Moreover, the $m$ slopes in this representation may be chosen arbitrarily, provided they are distinct. This is useful for interpreting
the identification results below. In particular, it suggests that existence of a latent-slope representation at one cumulant order is by itself weak information: once $m$ slope atoms are permitted, every order-$m$ cumulant tensor can be represented, and the slopes themselves can even be fixed arbitrarily in advance.  The substantive content comes from a \emph{low-complexity} representation in which the number of effective slopes is small enough to be identified from the observed cumulants.

\paragraph{A common slope dictionary across several orders.}
A similar saturation statement holds simultaneously for all cumulant
orders up to $m$.

\begin{corollary}[Common-slope algebraic representation]
\label{cor:common_slope_dictionary}
Suppose $(X,Y)$ has finite moments through order $m$, and fix any $m$
distinct real numbers
$\vartheta_1,\ldots,\vartheta_m$.
For every $\ell=2,\ldots,m$, there exist weights
$a_1^{(\ell)},\ldots,a_m^{(\ell)}$ and $b_\ell$ such that
\begin{equation*}
	K_\ell(X,Y)
	=
	\sum_{j=1}^m
	a_j^{(\ell)}
	v(\vartheta_j)^{\otimes \ell}
	+
	b_\ell e_2^{\otimes \ell}.
\end{equation*}
Equivalently,
\begin{equation}
	z_k^{(\ell)}
	=
	\sum_{j=1}^m
	a_j^{(\ell)}
	\vartheta_j^k,
	\qquad
	k=0,\ldots,\ell-1.
	\label{eq:common_dictionary_mixed}
\end{equation}
Thus the same set of $m$ slopes can algebraically represent all
cumulant tensors through order $m$, although the weights are allowed
to depend on the cumulant order.
\end{corollary}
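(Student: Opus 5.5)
The plan is to apply \Cref{prop:order_m_saturation} one order at a time, using the same $m$ slopes at every order. The only issue is that at an order $\ell<m$ there are more slopes ($m$) than mixed cumulants ($\ell$). Existence still holds there, but uniqueness is lost, and the statement only asks for existence.

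Fix $\ell\in\{2,\ldots,m\}$. Since $(X,Y)$ has finite moments through order $m$, all order-$\ell$ cumulants exist.

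\emph{Mixed entries.} I first solve \eqref{eq:common_dictionary_mixed}, i.e.\ $z_k^{(\ell)}=\sum_{j=1}^m a_j^{(\ell)}\vartheta_j^k$ for $k=0,\ldots,\ell-1$. This is an $\ell\times m$ linear system with coefficient matrix $V_\ell=(\vartheta_j^k)_{k<\ell,\,j\le m}$. Its first $\ell$ columns form a square Vandermonde matrix in the distinct nodes $\vartheta_1,\ldots,\vartheta_\ell$, which is nonsingular. So $V_\ell$ has full row rank $\ell$ and the system is solvable; for instance, set $a_j^{(\ell)}=0$ for $j>\ell$ and solve for the first $\ell$ weights. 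For $\ell=m$ this is exactly the unique solution from \Cref{prop:order_m_saturation}.

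\emph{Pure-response entry.} Next set $b_\ell=\kappa_\ell(Y)-\sum_j a_j^{(\ell)}\vartheta_j^\ell$.

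\emph{Tensor identity.} Finally I verify the tensor identity entrywise. Because $K_\ell(X,Y)$ is symmetric, an entry depends only on the number $k$ of indices equal to $2$, and it equals $\cum(X,\ldots,X,Y,\ldots,Y)$ with $k$ copies of $Y$. The corresponding entry of $v(\vartheta_j)^{\otimes\ell}$ is $\vartheta_j^k$. The entry of $e_2^{\otimes\ell}$ is $1$ when $k=\ell$ and $0$ otherwise. Hence:
\begin{itemize}[nosep]
\item for $k<\ell$, the claimed identity is exactly the mixed system above, since $e_2^{\otimes\ell}$ contributes nothing there;
\item for $k=\ell$, it is the definition of $b_\ell$.
\end{itemize}
This step reproduces the proof of \Cref{prop:order_m_saturation}, and the equivalence with \eqref{eq:common_dictionary_mixed} follows from the same correspondence: the entries with $k<\ell$ are precisely the mixed cumulants, while $b_\ell$ absorbs the single remaining entry.

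\emph{Main obstacle.} The only point needing care is the underdetermined case $\ell<m$. The argument rests on the full-row-rank observation, which needs distinctness of only $\ell$ of the nodes and therefore holds automatically. I would add a remark that the weights are not unique when $\ell<m$; this is consistent with the wording that the weights are merely "allowed to depend on the cumulant order." The cases $\ell=2,\ldots,m$ are otherwise independent, so no compatibility across orders is needed.
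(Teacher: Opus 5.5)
Your proposal is correct and follows the paper's own proof: both use the full row rank of the $\ell\times m$ Vandermonde matrix to solve the mixed equations (nonuniquely when $\ell<m$), then choose $b_\ell$ to match the pure-$Y$ cumulant. Your explicit solution with $a_j^{(\ell)}=0$ for $j>\ell$ and your entrywise check of the tensor identity simply spell out details the paper leaves implicit.
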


The results in \Cref{prop:order_m_saturation,cor:common_slope_dictionary} are statements about cumulant
tensors.  They do \emph{not} imply that every joint distribution of
$(X,Y)$ admits a structural representation
\eqref{eq:scalar_latent_model} through order $m$.
For such a structural representation to exist, the order-specific
weights must be compatible across $\ell$.  In particular, one would
need random variables $U_1,\ldots,U_r,\varepsilon$ such that
\[
a_j^{(\ell)}
=
\kappa_\ell(U_j),
\qquad
b_\ell
=
\kappa_\ell(\varepsilon),
\qquad
\ell=2,\ldots,m,
\]
for the same latent variables at every order.  Thus, for each $j$,
the sequence
\[
\bigl(
a_j^{(2)},a_j^{(3)},\ldots,a_j^{(m)}
\bigr)
\]
must be a realizable truncated cumulant sequence of a probability
distribution, and similarly for the response-only component.
These compatibility conditions are automatic at order two, but they are not
automatic at higher orders.

Accordingly, throughout the paper we distinguish a
\emph{structural latent-linear model}, in which common latent random
variables generate cumulants across orders, from an
\emph{algebraic cumulant representation} at one or several fixed
orders.

\paragraph{Identification versus saturation.}
The distinction above leads to a useful hierarchy.  At a fixed order
$m$, the mixed-cumulant vector
\[
\bigl(
z_0^{(m)},\ldots,z_{m-1}^{(m)}
\bigr)
\]
contains $m$ scalar coordinates.  An $r$-slope representation
\[
z_k^{(m)}
=
\sum_{j=1}^r
a_j^{(m)}\theta_j^k
\]
contains $r$ slope parameters and $r$ order-$m$ weights.  As shown in
\Cref{sec:identification}, when
\(
2r\le m
\)
and the usual nondegeneracy conditions hold, these $2r$ parameters, i.e., $\{a_j^{(m)},\theta_j: 1\le j\le r\}$,
are identifiable through a Hankel system.  By contrast,
\Cref{prop:order_m_saturation} shows that allowing
$r=m$ makes the order-$m$ representation saturated: every mixed
cumulant vector can be interpolated, even with a preassigned set of
distinct slopes.

Thus the informative regime is not the mere existence of a
higher-order latent-linear representation, but the existence of a
\emph{parsimonious} one.  Higher-order cumulants reveal latent
structure precisely when the effective slope spectrum is sufficiently
small relative to the cumulant order to be identifiable.

\section{Identification from a single cumulant order}
\label{sec:identification}

\subsection{Identification by Hankel matrices}
Assume that the order-$m$ spectrum contains $r$ distinct slopes $\theta_1,\ldots,\theta_r$ and corresponding nonzero weights $a_1,\ldots,a_r$. Suppress the superscript $(m)$ for $z_k$s and $a_j$s. 
For $s\ge1$, define the Hankel matrix
\begin{equation*}
\mathcal H_s(z)=\bigl(z_{i+j}\bigr)_{i,j=0}^{s-1}.
\end{equation*}

\begin{theorem}[Identification of the latent slope spectrum]
\label{thm:prony-identification}
Suppose $m\ge2r$, the slopes $\theta_1,\ldots,\theta_r$ are distinct, and $a_j\neq0$ for every $j$. Then $\mathcal H_r(z)$ is nonsingular and the unordered slope spectrum $\{\theta_1,\ldots,\theta_r\}$ is uniquely determined by $z_0,\ldots,z_{2r-1}$.
\end{theorem}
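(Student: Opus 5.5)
The plan is the classical Prony argument, built on the representation $z_k=\sum_{j=1}^r a_j\theta_j^k$ from \Cref{thm:finite_exponential}. We only need this for $k=0,\ldots,2r-1$, which is available because $m\ge 2r$. Let $V=(\theta_j^{i})_{i=0,\ldots,r-1;\,j=1,\ldots,r}$ be the $r\times r$ Vandermonde matrix and $A=\diag(a_1,\ldots,a_r)$. The first step is the factorization
\[
\mathcal H_r(z)=\Bigl(\sum_{j=1}^r a_j\theta_j^{i+l}\Bigr)_{i,l=0}^{r-1}=VAV^\trans .
\]
Since the $\theta_j$ are distinct, $\det V=\prod_{i<j}(\theta_j-\theta_i)\neq0$. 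Since every $a_j\neq0$, $\det A\neq0$. Hence $\det\mathcal H_r(z)=\prod_j a_j\prod_{i<j}(\theta_j-\theta_i)^2\neq0$, which proves nonsingularity.

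The second step introduces the annihilating polynomial $p(t)=\prod_{j=1}^r(t-\theta_j)=t^r-\sum_{i=0}^{r-1}c_it^i$. For each $k=0,\ldots,r-1$,
\[
z_{k+r}-\sum_{i=0}^{r-1}c_iz_{k+i}=\sum_j a_j\theta_j^k p(\theta_j)=0 .
\]
These $r$ equations read $\mathcal H_r(z)\,c=(z_r,\ldots,z_{2r-1})^\trans$ with $c=(c_0,\ldots,c_{r-1})^\trans$. They involve only $z_0,\ldots,z_{2r-1}$; the largest index is $(r-1)+r=2r-1$.

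The third step is uniqueness. Because $\mathcal H_r(z)$ is invertible, $c$ is the unique solution of this system, so it is a function of $z_0,\ldots,z_{2r-1}$ alone. Hence $p$, and with it its root set $\{\theta_1,\ldots,\theta_r\}$, is determined by these cumulants. Any other representation with $r$ distinct slopes and nonzero weights producing the same $z_0,\ldots,z_{2r-1}$ yields an annihilating polynomial satisfying the same invertible system. It therefore has the same coefficients and the same roots, so the unordered spectrum is unique. If desired, the weights then follow from the Vandermonde system $Va=(z_0,\ldots,z_{r-1})^\trans$, although this is not required by the statement.

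I expect the only delicate point to be the uniqueness logic. The competing representation must be assumed to have the same number $r$ of atoms, or we must note that the rank of the Hankel matrices pins down $r$. The statement fixes $r$, so I will argue at fixed $r$ and remark that $r$ equals the rank of the Hankel matrices under the hypotheses.
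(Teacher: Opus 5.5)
Your proof is correct and is essentially the paper's Prony argument: the factorization $\mathcal H_r(z)=V\diag(a_1,\ldots,a_r)V^\trans$ followed by the annihilating-polynomial system. The only difference is direction: you start from $p(t)=\prod_j(t-\theta_j)$ and use uniqueness of the solution of the invertible system, whereas the paper starts from that solution $c$ and deduces $p_c(\theta_j)=0$ from nonsingularity of $V\diag(a_1,\ldots,a_r)$.

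Argue at fixed $r$ as you plan, but drop the alternative that Hankel rank ``pins down'' $r$. From $z_0,\ldots,z_{2r-1}$ alone the rank only bounds the number of atoms from below, and representations with more atoms always exist (compare \Cref{prop:order_m_saturation}); this is why the statement is about a parsimonious $r$-atom representation.
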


The latent slope spectrum can be given more explicitly under the conditions of \Cref{thm:prony-identification}. Let $c=(c_0,\ldots,c_{r-1})^\trans$ solve
\begin{equation}
\mathcal H_r(z)c=
\begin{pmatrix}
	z_r\\z_{r+1}\\\vdots\\z_{2r-1}
\end{pmatrix}.
\label{eq:annihilator-system}
\end{equation}
Then the slopes are exactly the roots of
\begin{equation}
p(t)=t^r-\sum_{\ell=0}^{r-1}c_\ell t^\ell.
\label{eq:annihilator-poly}
\end{equation}
Once the roots are known, the weights $a_j$ are uniquely determined by the first $r$ equations in \eqref{eq:finite_exponential}.

\begin{corollary}[Identification capacity of one cumulant order]
\label{cor:capacity}
The $r$ distinct effective slopes of the scalar latent linear model \eqref{eq:scalar_latent_model} are generically identifiable from $z_k^{(m)}$, $k=0,\ldots,m-1$ if
\[
r\le r_{\max}=\left\lfloor\frac m2\right\rfloor.
\]
In particular, the latent slope set \eqref{eq:latent_slope_set} of the scalar omitted-variable model \eqref{eq:omitted_variables} is generically identifiable from these cumulants if
\begin{equation*}
q\le \left\lfloor\frac m2\right\rfloor-1.
\end{equation*}
\end{corollary}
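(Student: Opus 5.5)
The plan is to reduce the corollary directly to \Cref{thm:finite_exponential} and \Cref{thm:prony-identification}, and then to make precise what ``generically'' means. By \Cref{thm:finite_exponential}, the available data $z_0^{(m)},\ldots,z_{m-1}^{(m)}$ are the first $m$ moments of the signed measure $\mu_m=\sum_j a_j\delta_{\theta_j}$ with $a_j=\kappa_m(U_j)$. First I would merge repeated slopes, summing their weights, and discard atoms whose merged weight vanishes. This leaves a representation with distinct slopes and nonzero weights. Its support is the set of slopes actually visible at order $m$, namely the order-$m$ latent slope spectrum of \Cref{def:latent_slope_spectrum} when no cancellation of merged weights occurs.

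The generic condition will be that the $r$ slopes are distinct and that every $\kappa_m(U_j)\neq 0$. The complement of this set is contained in the zero set of the nontrivial polynomial
\[
\prod_j \kappa_m(U_j)\prod_{i<j}(\theta_i-\theta_j)
\]
in the parameters $(a,\theta)\in\R^{2r}$. That zero set is closed, nowhere dense and of Lebesgue measure zero, which is the sense of genericity I would state. On this set, suppose $r\le\lfloor m/2\rfloor$, which is the same as $2r\le m$. Then the indices $k=0,\ldots,2r-1$ all lie in $\{0,\ldots,m-1\}$, so $z_0,\ldots,z_{2r-1}$ are observed. \Cref{thm:prony-identification} then gives that $\mathcal H_r(z)$ is nonsingular and that the unordered spectrum is uniquely determined, through \eqref{eq:annihilator-system} and \eqref{eq:annihilator-poly}. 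I would briefly recall why the theorem applies: $\mathcal H_r(z)=V^\trans \diag(a) V$, where $V$ is the $r\times r$ Vandermonde matrix of the $\theta_j$, so $\det\mathcal H_r(z)=\prod_j a_j\prod_{i<j}(\theta_i-\theta_j)^2\neq0$. This is exactly the generic polynomial above.

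Uniqueness needs care when the number of slopes is not assumed known. For that I would note that if two representations with at most $r$ atoms each produced the same $z_0,\ldots,z_{2r-1}$, their difference would be a signed measure with at most $2r\le m$ atoms annihilating all polynomials of degree $<2r$. Invertibility of the $2r\times 2r$ Vandermonde matrix then forces the difference to be zero, so the supports coincide.

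The omitted-variable statement follows by specialization. Under \eqref{eq:omitted_variables}, the model \eqref{eq:scalar_latent_model} holds with $r=q+1$ and slopes \eqref{eq:latent_slope_set}. Generically the $\gamma_j$ are distinct and nonzero, and $\kappa_m(\nu)$ and $\kappa_m(H_j)$ are all nonzero. The condition $q+1\le\lfloor m/2\rfloor$ is exactly $q\le\lfloor m/2\rfloor-1$.

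The main obstacle I expect is not algebraic but definitional: pinning down ``generic'' so that the statement is honest. Weights are cumulants of actual random variables, so one must check that the exceptional set is also negligible within realizable cumulant values. This holds because any nonzero real value of $\kappa_m$ is attainable for suitable distributions when $m\ge3$. One must also handle the case $m=2$ separately: there $r_{\max}=1$ and the claim is trivial.
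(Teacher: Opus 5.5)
Your proposal is correct and follows the paper's proof. The paper observes that $r\le\lfloor m/2\rfloor$ is equivalent to $2r\le m$, so the coordinates $z_0^{(m)},\ldots,z_{2r-1}^{(m)}$ required by \Cref{thm:prony-identification} are available, and it then substitutes $r=q+1$ for the omitted-variable case. Your extra material is sound but goes beyond the paper's one-line argument: the polynomial description of the exceptional set, and the Vandermonde uniqueness argument for unknown $r$, which works verbatim if the comparison class is representations with at most $\lfloor m/2\rfloor$ atoms.
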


This is an identification statement, not an existence statement. With more components, an order-$m$ cumulant may still admit many latent linear representations, but the nodes cannot generically be recovered from the $m$ available mixed-cumulant coordinates alone.

\subsection{Partial identification of omitted-variable bias}

It is instructive to consider the one-confounder case \eqref{eq:one-confounder-intro} and investigate the implications of these identification results. If
\[
X=H+\nu,\qquad Y=(\beta+\gamma)H+\beta\nu+\varepsilon,
\]
then exchanging $H$ and $\nu$ yields the same observed $(X,Y)$ while interchanging the roles of $\beta$ and $\beta+\gamma$. This observational symmetry is fundamental.

Consider the fourth-order case $m=4$. We can write
\begin{equation}
z_k:=z_k^{(4)}=a\theta_1^k+b\theta_2^k,\qquad k=0,1,2,3,
\label{eq:two-node}
\end{equation}
where
\[
a=\kappa_4(H),\quad b=\kappa_4(\nu),\quad
\{\theta_1,\theta_2\}=\{\beta,\beta+\gamma\}.
\]
Define
\begin{equation*}
H(z)=\begin{pmatrix}z_0&z_1\\z_1&z_2\end{pmatrix},
\qquad
q(z)=\begin{pmatrix}z_2\\z_3\end{pmatrix}.
\end{equation*}

\begin{corollary}[Fourth-order two-slope formula]
\label{cor:fourth-order}
Under \eqref{eq:two-node},
\begin{equation}
	\det H(z)=ab(\theta_1-\theta_2)^2.
	\label{eq:det-h}
\end{equation}
If this determinant is nonzero, define
\begin{equation*}
	\begin{pmatrix}c_0\\c_1\end{pmatrix}=H(z)^{-1}q(z).
\end{equation*}
Then
\begin{equation}
	c_1=\theta_1+\theta_2,\qquad c_0=-\theta_1\theta_2,
	\label{eq:symmetric-roots}
\end{equation}
and hence the two slopes are
\begin{equation}
	\theta_\pm=\frac{c_1\pm\sqrt{c_1^2+4c_0}}{2}.
	\label{eq:theta-pm}
\end{equation}
\end{corollary}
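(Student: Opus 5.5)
The plan is to verify the determinant identity by direct substitution, then to read off the recurrence coefficients from the annihilating polynomial and invoke uniqueness of the Hankel solution.

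\textbf{Determinant.} Substituting \eqref{eq:two-node} into $H(z)$ gives
\[
z_0z_2-z_1^2=(a+b)(a\theta_1^2+b\theta_2^2)-(a\theta_1+b\theta_2)^2 .
\]
After expanding, the $a^2\theta_1^2$ and $b^2\theta_2^2$ terms cancel, leaving $ab(\theta_1^2+\theta_2^2-2\theta_1\theta_2)=ab(\theta_1-\theta_2)^2$. This is \eqref{eq:det-h}. Equivalently, one can use the factorization
\[
H(z)=V\,\diag(a,b)\,V^\trans,\qquad V=\begin{pmatrix}1&1\\ \theta_1&\theta_2\end{pmatrix},
\]
so that $\det H(z)=ab(\det V)^2$.

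\textbf{Recurrence coefficients.} Let $p(t)=(t-\theta_1)(t-\theta_2)=t^2-(\theta_1+\theta_2)t+\theta_1\theta_2$. Set $\tilde c_1=\theta_1+\theta_2$ and $\tilde c_0=-\theta_1\theta_2$. Then $\theta_j^{k+2}=\tilde c_1\theta_j^{k+1}+\tilde c_0\theta_j^k$ for $j=1,2$ and every $k$. Multiplying by $a$ or $b$ and summing gives $z_{k+2}=\tilde c_1 z_{k+1}+\tilde c_0 z_k$ for $k=0,1$, which is exactly $H(z)(\tilde c_0,\tilde c_1)^\trans=q(z)$. When $\det H(z)\neq0$, the system has the unique solution $H(z)^{-1}q(z)$, so $(c_0,c_1)=(\tilde c_0,\tilde c_1)$. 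This is \eqref{eq:symmetric-roots}; it is the $r=2$ case of \Cref{thm:prony-identification} and \eqref{eq:annihilator-system}, made explicit.

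\textbf{Roots.} By Vieta, $\theta_1$ and $\theta_2$ are the roots of $t^2-c_1t-c_0=0$. The quadratic formula gives \eqref{eq:theta-pm}. The discriminant is
\[
c_1^2+4c_0=(\theta_1+\theta_2)^2-4\theta_1\theta_2=(\theta_1-\theta_2)^2\ge0,
\]
so the square root is real and $\theta_\pm$ are $\max$ and $\min$ of the two slopes. The proof has no real obstacle. The only point needing care is that nonvanishing of the determinant forces $a,b\ne0$ and $\theta_1\ne\theta_2$, which is exactly what makes the Hankel solution unique.
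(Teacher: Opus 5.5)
Your proof is correct and follows essentially the same route as the paper, which factors $H(z)=V\diag(a,b)V^\trans$ to obtain the determinant and then matches the annihilating polynomial $(t-\theta_1)(t-\theta_2)$ against $t^2-c_1t-c_0$. Your explicit check that $(-\theta_1\theta_2,\theta_1+\theta_2)$ solves the Hankel system, together with uniqueness when $\det H(z)\neq0$, spells out the step the paper leaves implicit by appeal to \Cref{thm:prony-identification}.
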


Under \eqref{eq:one-confounder-intro},
\begin{equation*}
\det H(z)=\kappa_4(H)\kappa_4(\nu)\gamma^2.
\end{equation*}
The determinant isolates the three fundamental fourth-order degeneracies: no slope separation, zero fourth cumulant of the confounding component, or zero fourth cumulant of the idiosyncratic component.

\begin{proposition}[Partial identification of omitted-variable bias]
\label{prop:ovb-set}
Suppose the assumptions of \Cref{cor:fourth-order} hold for \eqref{eq:one-confounder-intro}, and let $\theta_-<\theta_+$ denote the identified latent slopes. Then
\[
\{\beta,\beta+\gamma\}=\{\theta_-,\theta_+\},\qquad {\rm and} \qquad |\gamma|=\theta_+-\theta_-.
\]
Without an additional labeling restriction,
\begin{equation*}
	\beta\in\{\theta_-,\theta_+\},
\end{equation*}
and the omitted-variable bias belongs to
\begin{equation*}
	\mathcal B=\{\beta_{\LS}-\theta_-,\;\beta_{\LS}-\theta_+\}.
\end{equation*}
\end{proposition}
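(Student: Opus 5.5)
The plan is to read the result off \Cref{cor:fourth-order} and the observational symmetry noted just before it. There are three steps: the set identity, the value of $|\gamma|$, and the bias set.

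\emph{Set identity.} Under \eqref{eq:one-confounder-intro}, \Cref{thm:finite_exponential} with $m=4$ gives \eqref{eq:two-node}, with $a=\kappa_4(H)$, $b=\kappa_4(\nu)$ and $\{\theta_1,\theta_2\}=\{\beta+\gamma,\beta\}$. The hypotheses of \Cref{cor:fourth-order} include $\det H(z)=ab(\theta_1-\theta_2)^2\neq0$. Hence $a\neq0$, $b\neq0$ and $\gamma\neq0$, so the two slopes are distinct. By \eqref{eq:symmetric-roots}, $c_1=\theta_1+\theta_2$ and $c_0=-\theta_1\theta_2$, and these are functions of $z_0,\ldots,z_3$ alone. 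The polynomial $t^2-c_1t-c_0=(t-\theta_1)(t-\theta_2)$ therefore has the two distinct real roots $\theta_\pm$ of \eqref{eq:theta-pm}; the discriminant $c_1^2+4c_0=(\theta_1-\theta_2)^2$ is positive. This gives $\{\theta_-,\theta_+\}=\{\beta,\beta+\gamma\}$, and then $|\gamma|=|(\beta+\gamma)-\beta|=\theta_+-\theta_-$.

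\emph{Labeling.} To show that $\beta\in\{\theta_-,\theta_+\}$ is the sharpest statement available, I will use the exchange argument. Set $H'=\nu$, $\nu'=H$, $\beta'=\beta+\gamma$ and $\gamma'=-\gamma$. Then $X=H'+\nu'$ and $Y=\beta'X+\gamma'H'+\varepsilon$, with the same independence structure through order four. The observed law of $(X,Y)$ is literally unchanged, so every observable functional is unchanged. Both labelings $\beta=\theta_-$ and $\beta=\theta_+$ are therefore compatible with the same data, and both values genuinely occur.

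\emph{Bias set.} The omitted-variable bias is $\beta_{\LS}-\beta$. Here $\beta_{\LS}$ is identified from second moments, and \Cref{prop:ols-barycenter} gives it as a barycenter of $\beta$ and $\beta+\gamma$. Substituting the two admissible values of $\beta$ yields $\mathcal B$. Since both labelings are realized by the same law, no smaller set is identified.

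The only delicate point is the sharpness claim. It needs the exchanged model to be a legitimate member of the model class, in particular that finite-order independence is symmetric in the components, and this is immediate from \Cref{def:m-independence}. Beyond that, the proof is essentially bookkeeping on top of \Cref{cor:fourth-order}.
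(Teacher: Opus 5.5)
Your proposal is correct and follows the same route as the paper. You identify the unordered pair $\{\beta,\beta+\gamma\}$ from \Cref{cor:fourth-order}, then use the exchange $H\leftrightarrow\nu$, $\beta\mapsto\beta+\gamma$, $\gamma\mapsto-\gamma$ to show that the labels are not determined by the law of $(X,Y)$, after which $|\gamma|$ and $\mathcal B$ follow at once. Your additional checks (that $\gamma\neq0$ and the discriminant is positive because the determinant is nonzero, and that finite-order independence is preserved under relabeling) fill in details the paper leaves implicit.
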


A scientifically justified side condition can orient the pair. Examples include external knowledge about which component has larger variance, known support restrictions, or auxiliary measurements tied to one latent component. Such assumptions should be stated separately from the information supplied by cumulants themselves.

\section{Estimation and inference}
\label{sec:fourth}

We now discuss how the identification results from the previous section translate into statistical estimation and inference. Because sampling variability increases rapidly with cumulant order, we focus primarily on fourth-order methods, which already capture the main identification and weak-separation phenomena. \Cref{sec:multiorder} then briefly considers how information from several cumulant orders can be combined when a single order is weak or uninformative.

\subsection{Sample cumulants}
Let $(X_i,Y_i)_{i=1}^n$ be i.i.d. copies of $(X,Y)$. For $k=0,1,2,3$, define the empirical fourth mixed cumulant by applying the cumulant functional to the empirical distribution,
\begin{equation*}
\widehat z_k
=\cum_{\Pp_n}(\underbrace{X,\ldots,X}_{4-k},\underbrace{Y,\ldots,Y}_{k}).
\end{equation*}
One may equivalently use unbiased $k$-statistics; the difference is $O_{\Pp}(n^{-1})$ and does not affect first-order theory.

Let $M_4$ collect all raw moments $\E[X^aY^b]$ with $a+b\le4$, and $\widehat M_4$ be the empirical moment vector. Under the assumption that
\begin{equation}
\E[(|X|+|Y|)^8]<\infty,
\label{eq:eighth-moment}
\end{equation}
we have
\[
\sqrt{n}[\text{vec}(\widehat{M}_4)-\text{vec}(M_4)]\overset{d}{\longrightarrow}N(0,\Sigma_M),
\]
where $\Sigma_M$ is the covariance matrix of the monomial vector defining $\widehat M_4$. Write $z=\phi(M_4)$ for the polynomial map from raw moments to $(z_0,z_1,z_2,z_3)^\trans$, then the usual sandwich formula yields
\begin{equation}
\sqrt n(\widehat z-z)\overset{d}{\longrightarrow}N(0,\Omega),
\label{eq:z-clt}
\end{equation}
where
\begin{equation}
\Omega=D\phi(M_4)\Sigma_M D\phi(M_4)^\trans.
\label{eq:omega}
\end{equation}
A consistent estimator $\widehat\Omega$ can be obtained by plug-in. The representation \eqref{eq:omega} is convenient computationally: it involves moments only through total order eight and can be generated automatically from the monomial basis.

\subsection{Estimator of the latent slopes}
Define
\[
\widehat H=\begin{pmatrix}\widehat z_0&\widehat z_1\\\widehat z_1&\widehat z_2\end{pmatrix},
\qquad
\widehat q=\begin{pmatrix}\widehat z_2\\\widehat z_3\end{pmatrix}.
\]
Whenever $\widehat H$ is nonsingular, set
\begin{equation*}
\widehat c=\begin{pmatrix}\widehat c_0\\\widehat c_1\end{pmatrix}
=\widehat H^{-1}\widehat q,
\end{equation*}
and, when the discriminant is nonnegative,
\begin{equation*}
\widehat\theta_\pm
=\frac{\widehat c_1\pm\sqrt{\widehat c_1^2+4\widehat c_0}}{2},
\qquad \widehat\theta_-\le\widehat\theta_+.
\end{equation*}
Under the regular two-slope model, nonsingularity and a positive discriminant hold with probability tending to one.

Put $A=H(z)^{-1}$. The derivative of $c=H(z)^{-1}q(z)$ is
\begin{equation}
D_zc=A
\begin{pmatrix}
-c_0&-c_1&1&0\\
0&-c_0&-c_1&1
\end{pmatrix}.
\label{eq:jac-c}
\end{equation}
If $\Delta=c_1^2+4c_0>0$, then
\begin{equation}
D_c\theta_\pm=
\left(\pm\Delta^{-1/2},\;\frac12\pm\frac{c_1}{2\sqrt\Delta}\right).
\label{eq:jac-root}
\end{equation}

\begin{theorem}[Asymptotic normality of the fourth-order slope estimator]
\label{thm:theta-clt}
Assume \eqref{eq:eighth-moment}, $\det H(z)\neq0$, and $\theta_-<\theta_+$. Then with probability tending to one, $\widehat\theta_-$ and $\widehat\theta_+$ are distinct reals. Furthermore,
\begin{equation*}
\sqrt n
\begin{pmatrix}
	\widehat\theta_- -\theta_-\\
	\widehat\theta_+ -\theta_+
\end{pmatrix}
\overset{d}{\longrightarrow}N(0,V_\theta),
\end{equation*}
where
\begin{equation*}
V_\theta=G\Omega G^\trans,
\qquad
G=\begin{pmatrix}D_c\theta_-\\D_c\theta_+\end{pmatrix}D_zc.
\end{equation*}
The plug-in estimator $\widehat V_\theta$ is consistent.
\end{theorem}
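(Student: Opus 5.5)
The argument is a two-stage delta method built on the cumulant CLT \eqref{eq:z-clt}. The map $z\mapsto(\theta_-,\theta_+)$ factors as $z\mapsto c=H(z)^{-1}q(z)$ followed by $c\mapsto\theta_\pm$. We show that both stages are continuously differentiable on a neighbourhood of the true $z$, compose the Jacobians \eqref{eq:jac-c} and \eqref{eq:jac-root}, and apply the multivariate delta method. Consistency of $\widehat V_\theta$ then follows from the continuous mapping theorem.

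\textbf{Step 1 (well-definedness with probability tending to one).} From \eqref{eq:z-clt}, $\widehat z\to z$ in probability. Since $\det H(\cdot)$ is a polynomial and $\det H(z)\ne0$, there is an open neighbourhood $\mathcal N$ of $z$ on which $\det H\ne0$. On $\mathcal N$ the map $c(\cdot)$ is rational, hence $C^\infty$. By \Cref{cor:fourth-order}, $c_1^2+4c_0=(\theta_1+\theta_2)^2-4\theta_1\theta_2=(\theta_+-\theta_-)^2=\Delta>0$. The discriminant $c_1(\zeta)^2+4c_0(\zeta)$ is continuous in $\zeta$, so after shrinking $\mathcal N$ it stays positive there. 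Hence $\Pr(\widehat z\in\mathcal N)\to1$, and on this event $\widehat H$ is invertible and $\widehat\theta_-<\widehat\theta_+$ are distinct reals. This proves the first claim. The ordering $\widehat\theta_-\le\widehat\theta_+$ is automatic because the $+$ root carries $+\sqrt{\cdot}$. Choosing the root labels consistently with the truth is also automatic, because $\theta_\pm(\cdot)$ is a smooth single-valued branch on $\mathcal N$.

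\textbf{Step 2 (Jacobians).} We verify \eqref{eq:jac-c} by differentiating $H(z)c=q(z)$. This gives $H\,\dd c=\dd q-(\dd H)c$. Writing out $(\dd H)c=(c_0\,\dd z_0+c_1\,\dd z_1,\;c_0\,\dd z_1+c_1\,\dd z_2)^\trans$ and $\dd q=(\dd z_2,\dd z_3)^\trans$ yields
\[
\dd c=A\begin{pmatrix}-c_0&-c_1&1&0\\0&-c_0&-c_1&1\end{pmatrix}\dd z .
\]
Next, from $\theta_\pm=\tfrac12(c_1\pm\sqrt{c_1^2+4c_0})$, direct differentiation gives $\partial\theta_\pm/\partial c_0=\pm\Delta^{-1/2}$ and $\partial\theta_\pm/\partial c_1=\tfrac12\pm c_1/(2\sqrt\Delta)$, which is \eqref{eq:jac-root}. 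The chain rule then gives the overall derivative $G$.

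\textbf{Step 3 (delta method and variance estimation).} On $\mathcal N$ the map $g:\zeta\mapsto(\theta_-(\zeta),\theta_+(\zeta))$ is $C^1$ with $Dg(z)=G$. By \eqref{eq:z-clt}, which needs the eighth-moment condition \eqref{eq:eighth-moment}, the delta method gives $\sqrt n(g(\widehat z)-g(z))\to N(0,G\Omega G^\trans)$. The event $\widehat z\notin\mathcal N$ has vanishing probability, so it does not affect the limit. For $\widehat V_\theta=\widehat G\widehat\Omega\widehat G^\trans$, take $\widehat\Omega$ consistent by plug-in: it is a continuous function of empirical moments through order eight, which converge by the law of large numbers under \eqref{eq:eighth-moment}. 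Since $\widehat G=Dg(\widehat z)$ and $Dg$ is continuous on $\mathcal N$, the continuous mapping theorem gives $\widehat V_\theta\to V_\theta$ in probability.

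\textbf{Main obstacle.} The only delicate step is Step 1: ensuring that the estimator is a smooth function of $\widehat z$ near $z$. This requires both nonsingularity and a strictly positive discriminant, and the latter is exactly where the separation $\theta_-<\theta_+$ enters. Without it, $\sqrt{\cdot}$ fails to be differentiable at zero, which is the weak-identification regime treated later. Once a neighbourhood $\mathcal N$ is secured on which both conditions hold, everything else is routine.
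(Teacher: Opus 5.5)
Your proposal is correct and follows the paper's proof. The paper also factors the map as $z\mapsto c\mapsto(\theta_-,\theta_+)$, notes that both stages are continuously differentiable near the truth, composes the Jacobians \eqref{eq:jac-c} and \eqref{eq:jac-root}, applies the delta method to \eqref{eq:z-clt}, and gets plug-in consistency by continuity; you additionally make explicit the neighbourhood argument (nonsingular $H$ and discriminant $(\theta_+-\theta_-)^2>0$) that the paper leaves implicit.
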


The same argument yields joint inference for the least-squares coefficient and the two candidate omitted-variable biases. Because $(\widehat\beta_{\LS},\widehat z)$ is a smooth function of empirical moments through degree four, the vector
\begin{equation*}
\left(\widehat\beta_{\LS},\widehat\theta_-,\widehat\theta_+,
\widehat\beta_{\LS}-\widehat\theta_-,
\widehat\beta_{\LS}-\widehat\theta_+\right)
\end{equation*}
is asymptotically normal under the same regularity conditions.

\subsection{A diagnostic for one versus two effective slopes}
Define
\begin{equation*}
h(z)=z_0z_2-z_1^2.
\end{equation*}
Under a single effective fourth-order slope, $z_k=a\theta^k$ and $h(z)=0$. Under a regular two-slope model, $h(z)=ab(\theta_1-\theta_2)^2$. This suggests
\begin{equation*}
\widehat h=\widehat z_0\widehat z_2-\widehat z_1^2,
\end{equation*}
as a test statistic for one versus two effective fourth-order slopes. Under the structural one-confounder model, rejection is evidence of fourth-order slope heterogeneity associated with omitted-variable effects, subject to the cumulant nondegeneracy conditions below.

\begin{proposition}[Asymptotic test of one effective fourth-order slope]
\label{prop:h-test}
Suppose \eqref{eq:eighth-moment} holds and, under the null that there is a single effective fourth-order slope,
\begin{equation*}
z_k=a\theta^k,\qquad a\neq0.
\end{equation*}
Then
\begin{equation*}
\sqrt n\,\widehat h\overset{d}{\longrightarrow}N(0,\tau_h^2),
\end{equation*}
where
\begin{equation}
\tau_h^2=\nabla h(z)^\trans\Omega\nabla h(z),
\qquad
\nabla h(z)=(z_2,-2z_1,z_0,0)^\trans.
\label{eq:tau-h}
\end{equation}
If, in addition, $\tau_h^2>0$ and $\widehat\tau_h$ is a consistent plug-in estimator of $\tau_h$, then
\[
Z_h=\frac{\sqrt n\,\widehat h}{\widehat\tau_h}\overset{d}{\longrightarrow}N(0,1).
\]
so a two-sided normal critical value yields an asymptotic level-$\alpha$ test of the null that the fourth-order mixed-cumulant sequence has one effective slope.
\end{proposition}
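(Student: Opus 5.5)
The proof is a delta-method argument built on the joint CLT \eqref{eq:z-clt} for $\widehat z$. Moment condition \eqref{eq:eighth-moment} gives $\sqrt n(\widehat z-z)\overset{d}{\to}N(0,\Omega)$ with $\Omega$ as in \eqref{eq:omega}. The map $h(z)=z_0z_2-z_1^2$ is a polynomial, so it is continuously differentiable with gradient $\nabla h(z)=(z_2,-2z_1,z_0,0)^\trans$. Under the null, $z_k=a\theta^k$ for $k=0,\dots,3$, so $h(z)=a\cdot a\theta^2-(a\theta)^2=0$.

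Since $h$ is quadratic, the first step is the exact expansion
\[
\widehat h-h(z)=\nabla h(z)^\trans(\widehat z-z)+\bigl[(\widehat z_0-z_0)(\widehat z_2-z_2)-(\widehat z_1-z_1)^2\bigr].
\]
The bracket is $O_{\Pp}(n^{-1})$ by \eqref{eq:z-clt}. Multiplying by $\sqrt n$ and using $h(z)=0$ gives
\[
\sqrt n\,\widehat h=\nabla h(z)^\trans\sqrt n(\widehat z-z)+o_{\Pp}(1).
\]
By the continuous mapping theorem and Slutsky, $\sqrt n\,\widehat h\overset{d}{\to}N(0,\nabla h(z)^\trans\Omega\nabla h(z))$, which is \eqref{eq:tau-h}. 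Note that $\nabla h(z)=a(\theta^2,-2\theta,1,0)^\trans\neq0$ because $a\neq0$, so the limit is a genuine first-order one. Whether $\tau_h^2>0$ still depends on $\Omega$, which is why it appears as a separate hypothesis.

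For the studentized statistic, take $\widehat\tau_h^2=\nabla h(\widehat z)^\trans\widehat\Omega\nabla h(\widehat z)$. The plug-in $\widehat\Omega$ is built as $D\phi(\widehat M_4)\widehat\Sigma_M D\phi(\widehat M_4)^\trans$. Here $\widehat\Sigma_M$ is the empirical covariance of the degree-$\le4$ monomials, whose entries are empirical moments of order at most eight. These are consistent by the weak law of large numbers, which only needs the finite eighth moments in \eqref{eq:eighth-moment}. The matrix $D\phi$ is polynomial, so it is continuous.

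Continuity then gives $\widehat\tau_h\overset{\Pp}{\to}\tau_h$, and since $\tau_h>0$, Slutsky yields $Z_h\overset{d}{\to}N(0,1)$. Asymptotic level $\alpha$ of the two-sided test follows because the $N(0,1)$ distribution function is continuous at the critical values.

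The only step needing care is consistency of $\widehat\Omega$: one must check that every entry involves empirical moments of order at most eight, so that the WLLN applies without extra moment assumptions. Everything else is routine delta-method bookkeeping.
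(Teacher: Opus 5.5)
Your proposal is correct and takes the same route as the paper: the delta method applied to the cumulant CLT \eqref{eq:z-clt}, using that $h$ is smooth with the stated gradient and that $h(z)=0$ under the one-slope null. Your exact quadratic remainder and your check that $\widehat\Omega$ involves only empirical moments of order at most eight just make explicit what the paper leaves implicit, including its earlier remark that plug-in consistency follows from the law of large numbers through degree eight.
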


The null is deliberately phrased as a statement about the fourth-order cumulant sequence, not as ``no confounding.'' A confounded component with zero fourth cumulant is invisible to this test.

\begin{remark}[Closed form under Gaussian noise and a null degeneracy]
\label{rem:tau-closed}
It is instructive to consider the case under Gaussian noise where a closed-form expression can be obtained. Suppose $\gamma=0$, so $Y=\beta X+\varepsilon$ with $\varepsilon\sim N(0,\sigma_\varepsilon^2)$ independent of $X$, $\Var(X)=1$, and $X$ symmetric. The influence function of $\widehat h$ then factors as
\[
\kappa_4(X)\,(X^2-\E X^2)\,(\varepsilon^2-\E\varepsilon^2),
\]
and \eqref{eq:tau-h} becomes
\begin{equation}
\tau_h^2
=2\sigma_\varepsilon^4\,\kappa_4(X)^2\bigl(\E[X^4]-1\bigr).
\label{eq:tau-gaussian}
\end{equation}
Thus $\tau_h=0$ whenever $\kappa_4(X)=0$. For $X=\sqrt\rho\,H_0+\sqrt{1-\rho}\,\nu_0$ with symmetric independent factors,
\[
\kappa_4(X)=\rho^2\kappa_4(H_0)+(1-\rho)^2\kappa_4(\nu_0),
\]
so the marginal cumulant can vanish by cancellation even though $\kappa_4(H_0)$ and $\kappa_4(\nu_0)$ are both nonzero. Alternative identification of two slopes requires the latent cumulants to be nonzero. First-order regularity of the determinant test requires the further condition $\kappa_4(X)\neq0$.
\end{remark}

\subsection{Weak separation and an \texorpdfstring{$n^{-1/4}$}{n to the -1/4} local scale}
Equation \eqref{eq:det-h} shows that the determinant signal is quadratic in the slope separation.
\begin{proposition}[Local separation for the determinant diagnostic]
\label{prop:local-n14}
Consider a triangular array of two-slope models \eqref{eq:two-node} with weights $a_n\to a\neq0$ and $b_n\to b\neq0$ and slope gap
\begin{equation}
\delta_n=\theta_{2,n}-\theta_{1,n}=tn^{-1/4}
\label{eq:delta-n14}
\end{equation}
for fixed $t\in\R$. Suppose the joint cumulant CLT holds uniformly along the sequence and $\tau_h^2\to\tau_0^2\in(0,\infty)$. Then
\begin{equation}
\sqrt n\,\widehat h\overset{d}{\longrightarrow}N(abt^2,\tau_0^2).
\label{eq:local-h}
\end{equation}
Thus the determinant test has nontrivial local power when the slopes are separated on the $n^{-1/4}$ scale.
\end{proposition}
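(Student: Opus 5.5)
The plan is a delta-method expansion of $h$ around the population cumulant vector $z_n$ along the array, combined with the exact identity \eqref{eq:det-h}. Write
\[
\sqrt n\,\widehat h=\sqrt n\,h(z_n)+\sqrt n\,\bigl(h(\widehat z)-h(z_n)\bigr).
\]
By \Cref{cor:fourth-order}, $h(z_n)=a_nb_n\delta_n^2=a_nb_nt^2n^{-1/2}$. Hence the deterministic term $\sqrt n\,h(z_n)=a_nb_nt^2$ converges to $abt^2$. The rest of the argument shows that the stochastic term converges to $N(0,\tau_0^2)$.

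For the stochastic term, use that $h$ is a quadratic polynomial. Its second-order Taylor expansion is therefore exact:
\[
h(\widehat z)-h(z_n)=\nabla h(z_n)^\trans(\widehat z-z_n)+Q(\widehat z-z_n),
\]
where $Q(u)=u_0u_2-u_1^2$. The uniform cumulant CLT gives $\sqrt n(\widehat z-z_n)=O_{\Pp}(1)$. So $\sqrt n\,Q(\widehat z-z_n)=n^{-1/2}Q\bigl(\sqrt n(\widehat z-z_n)\bigr)=O_{\Pp}(n^{-1/2})$, which is negligible. For the linear term, write $\sqrt n(\widehat z-z_n)=\Omega_n^{1/2}\xi_n+o_{\Pp}(1)$ with $\xi_n\Rightarrow N(0,I)$; this is how I read ``the CLT holds uniformly along the sequence''. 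Then the linear term is asymptotically $N\bigl(0,\nabla h(z_n)^\trans\Omega_n\nabla h(z_n)\bigr)$. That variance is exactly the $\tau_h^2$ of \eqref{eq:tau-h} evaluated along the array, so it tends to $\tau_0^2$ by hypothesis. Slutsky then yields \eqref{eq:local-h}.

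The main obstacle is the step that turns a uniform CLT for $\widehat z$ into convergence of the projected statistic, since the projection direction $\nabla h(z_n)$ varies with $n$. I would handle it as follows.
\begin{itemize}
\item Note that $z_n\to z_\ast$, with $z_{\ast,k}=(a+b)\theta^k$ on a limiting common slope $\theta$, at least along subsequences. This holds provided $\theta_{1,n}$ stays bounded, which I would impose implicitly or obtain from the uniform moment conditions. Then $\nabla h(z_n)\to\nabla h(z_\ast)$.
\item Apply Cramér--Wold with the fixed vector $\nabla h(z_\ast)$. Bound the difference $\bigl(\nabla h(z_n)-\nabla h(z_\ast)\bigr)^\trans\sqrt n(\widehat z-z_n)=o(1)\cdot O_{\Pp}(1)$.
\item If instead $\Omega_n\to\Omega_\ast$ is part of the uniform CLT assumption, the same argument identifies the limit variance as $\nabla h(z_\ast)^\trans\Omega_\ast\nabla h(z_\ast)=\tau_0^2$.
\end{itemize}

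The local-power conclusion follows directly. The noncentrality $abt^2/\tau_0$ is nonzero for $t\neq0$, and $\widehat\tau_h\to\tau_0$ by continuity of the plug-in under the same uniform moment bounds. So $Z_h\Rightarrow N(abt^2/\tau_0,1)$, and the two-sided test has limiting power strictly above $\alpha$.
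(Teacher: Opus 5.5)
Your proposal is correct and follows the paper's own route: the deterministic shift $\sqrt n\,h(z_n)=a_nb_nt^2\to abt^2$ comes from the determinant identity, and a delta expansion of $\widehat h$ around $z_n$ under the assumed uniform CLT gives the $N(0,\tau_0^2)$ fluctuation. The paper asserts that second step in one line; you make explicit two points it leaves implicit, namely that the quadratic remainder is exactly $O_{\Pp}(n^{-1/2})$ and that the gradient direction $\nabla h(z_n)$ changes with $n$.
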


This concerns detection of slope multiplicity, not regular estimation of individual roots. As $\delta_n\to0$, $H(z)$ becomes singular and the root map ceases to be differentiable, so $(\widehat\theta_-,\widehat\theta_+)$ is not estimable at the root-$n$ rate uniformly down to a repeated root. The $n^{-1/4}$ scale in \eqref{eq:delta-n14} is the scale on which $\sqrt n\,h(z_n)$ stays of constant order. It is not an estimation rate for the individual slopes. At $t=0$ the two-root parameterization is not identified: a geometric sequence satisfies many quadratic recurrences.

The condition number of the sample matrix $\widehat H$ is a numerical diagnostic, not a test of this degeneracy. When $H(z)$ is singular, sampling error moves the zero singular value off zero, and a typical $\widehat H$ need not look ill-conditioned. The studentized statistic $|Z_h|=\sqrt n\,|\widehat h|/\widehat\tau_h$ tracks the size of the determinant relative to \eqref{eq:tau-h}. It is used below only as a descriptive measure of identification strength. A cutoff such as $|Z_h|\ge2$ is not a basis for post-selection Wald inference.

\subsection{A cross-order extension}
\label{sec:multiorder}
A component may be invisible at one order but visible at another. This motivates combining several orders while exploiting the fact that the latent slopes do not change with cumulant order.

For $\ell\in\mathcal M\subset\{3,4,5,\ldots\}$, define
\begin{equation*}
z_k^{(\ell)}=\sum_{j=1}^ra_j^{(\ell)}\theta_j^k,
\qquad k=0,\ldots,\ell-1,
\end{equation*}
where $a_j^{(\ell)}=\kappa_\ell(U_j)$ but the nodes $\theta_j$ are common. Let
\[
p_c(t)=t^r-\sum_{h=0}^{r-1}c_ht^h.
\]
Every order obeys the same recurrence
\begin{equation*}
z_{k+r}^{(\ell)}-\sum_{h=0}^{r-1}c_hz_{k+h}^{(\ell)}=0,
\qquad k=0,\ldots,\ell-r-1.
\end{equation*}
Stack the sample recurrence residuals into $\widehat R(c)$ and define
\begin{equation}
\widehat c=\arg\min_c\widehat R(c)^\trans W_n\widehat R(c),
\label{eq:gmm-c}
\end{equation}
where $W_n$ is positive definite. The estimated slopes are the roots of $p_{\widehat c}$.

This construction has three advantages. First, one order can identify a component that is absent at another. Second, when several orders are informative, the recurrence equations are overidentified and can improve precision. Third, the overidentifying restrictions provide a specification diagnostic: different cumulant orders must imply the same slope polynomial.

\begin{proposition}[Multi-order minimum distance]
\label{prop:multiorder-gmm}
Suppose the stacked population recurrence $R(c)=0$ uniquely identifies $c=c_0$, the relevant sample cumulants satisfy a joint root-$n$ CLT, and $W_n\to W$ with $W$ positive definite. If $D_cR(c_0)$ has full column rank, then the minimum-distance estimator in \eqref{eq:gmm-c} is root-$n$ consistent and asymptotically normal. If the asymptotic covariance of the stacked recurrence residuals is nonsingular, weighting by its inverse is efficient within this minimum-distance class. If the roots of $p_{c_0}$ are simple, the estimated latent slopes are also root-$n$ asymptotically normal.
\end{proposition}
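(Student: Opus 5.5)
The plan is to treat \eqref{eq:gmm-c} as a standard minimum-distance problem. The one structural fact to exploit is that the stacked residual $R(c)$ is \emph{affine} in $c$. Each recurrence equation reads $z^{(\ell)}_{k+r}-\sum_h c_h z^{(\ell)}_{k+h}$, so we can write $\widehat R(c)=\widehat q-\widehat A c$ and $R(c)=q-Ac$. Here $\widehat A$ stacks the rows $(\widehat z^{(\ell)}_k,\ldots,\widehat z^{(\ell)}_{k+r-1})$ over $\ell\in\mathcal M$ and $k=0,\ldots,\ell-r-1$, and $\widehat q$ stacks $\widehat z^{(\ell)}_{k+r}$. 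In particular $D_cR(c)=-A$ is constant, and both $\widehat A$ and $\widehat q$ are linear in the sample cumulants. By the assumed joint CLT, $\sqrt n(\widehat A-A,\widehat q-q)=O_{\Pp}(1)$ and is jointly asymptotically normal.

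\textbf{Step 1: closed form and consistency.} Since $A$ has full column rank, $\widehat A^\trans W_n\widehat A\to A^\trans WA$, which is positive definite. So with probability tending to one the quadratic objective is strictly convex and
\[
\widehat c=(\widehat A^\trans W_n\widehat A)^{-1}\widehat A^\trans W_n\widehat q .
\]
Consistency $\widehat c\to c_0$ then follows from the continuous mapping theorem and $q=Ac_0$. Full column rank also gives the uniqueness of $c_0$ directly.

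\textbf{Step 2: the expansion.} Because $Ac_0=q$, we have $\widehat q-\widehat A c_0=\widehat R(c_0)$, hence
\[
\widehat c-c_0=(\widehat A^\trans W_n\widehat A)^{-1}\widehat A^\trans W_n\widehat R(c_0).
\]
The residual $\widehat R(c_0)$ is a fixed linear map $L_{c_0}$ applied to the stacked sample cumulant error $\widehat z-z$. It therefore satisfies $\sqrt n\,\widehat R(c_0)\to N(0,S)$ with $S=L_{c_0}\Omega_{\mathcal M}L_{c_0}^\trans$. Slutsky's lemma gives
\[
\sqrt n(\widehat c-c_0)\to N\bigl(0,(A^\trans WA)^{-1}A^\trans WSWA(A^\trans WA)^{-1}\bigr).
\]
If $S$ is nonsingular, the standard Gauss--Markov/GMM matrix inequality shows the sandwich is bounded below by $(A^\trans S^{-1}A)^{-1}$, with equality at $W=S^{-1}$. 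To see this, write the difference as $B S B^\trans$ with $B=(A^\trans WA)^{-1}A^\trans W-(A^\trans S^{-1}A)^{-1}A^\trans S^{-1}$, using $BA=0$. A consistent plug-in $\widehat S^{-1}$ yields the same limit.

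\textbf{Step 3: the slopes.} The map from the coefficients of the monic polynomial $p_c$ to its roots is the delta-method step. When the roots of $p_{c_0}$ are simple, the implicit function theorem applied to $p_c(\theta_j)=0$ makes each root locally a smooth function of $c$, since $p_{c_0}'(\theta_j)\neq0$. The derivative is $\partial\theta_j/\partial c_h=\theta_j^h/p_{c_0}'(\theta_j)$. With probability tending to one the roots of $p_{\widehat c}$ are real, simple and near the $\theta_j$. This holds because real simple roots of a real polynomial persist under small real perturbations: complex roots come in conjugate pairs and cannot split off a simple real root. The delta method then gives joint root-$n$ normality of the ordered estimated slopes.

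\textbf{Main obstacle.} The only delicate point is this last step: ensuring the root selection and ordering is well defined uniformly with high probability, and that $L_{c_0}$ and the stacking are correctly matched across orders of different lengths. Both are handled by continuity once the roots are simple. Everything else is linear algebra.
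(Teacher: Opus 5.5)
Your proposal is correct and follows essentially the same route as the paper. Both arguments use the affine structure of $\widehat R(c)$ to obtain the exact expansion $\sqrt n(\widehat c-c_0)=(\widehat A^\trans W_n\widehat A)^{-1}\widehat A^\trans W_n\sqrt n\,\widehat R(c_0)$, then apply Slutsky to get the sandwich covariance, specialize to $W=S^{-1}$ for efficiency, and finish with the implicit function theorem and the delta method for simple roots. Your explicit $BSB^\trans$ verification of the efficiency bound, the root derivative $\theta_j^h/p_{c_0}'(\theta_j)$, and the persistence of real simple roots fill in details that the paper only asserts.
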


\section{Multiple Regression: Detection Without Full Identification}
\label{sec:multivariate}

The distinction between detection and estimation becomes even more
pronounced in multiple regression.  In the scalar model, a determinant
of the fourth-order Hankel matrix can detect the presence of two
effective slopes in regimes where the individual roots are only weakly
identified.  In the multivariate setting, an analogous phenomenon is
possible at a more fundamental level: one can test whether the data are
compatible with a single effective coefficient vector even when the
individual coefficient vectors under a two-component model are not
separately identifiable.

\subsection{Two latent coefficient vectors and a tensor recurrence}
\label{sec:multiv_recurrence}

Let $X\in\mathbb{R}^p$ and consider
\begin{equation}
X = U+V,
\qquad
Y = \eta^\top U+\beta^\top V+\varepsilon,
\label{eq:multiv_two_component}
\end{equation}
where $U,V\in\mathbb{R}^p$ and $U$, $V$, and $\varepsilon$ are
jointly independent through order $m$.  No independence among the
coordinates within $U$ or within $V$ is required.

The omitted-variable model
\begin{equation}
X=\Delta H+\nu,
\qquad
Y=X^\top\beta+H^\top\gamma+\varepsilon,
\label{eq:multiv_ovb_model}
\end{equation}
is a special case.  Assume $\Delta^\top\Delta=I_q$ without loss of generality. \eqref{eq:multiv_ovb_model} becomes
\eqref{eq:multiv_two_component} when we take $U=\Delta H$, $V=\nu$, and $\eta=\beta+\Delta\gamma$.

For $k=0,\ldots,m-1$, define the symmetric mixed-cumulant tensor
\begin{equation*}
Z_k
=
\operatorname{cum}
\bigl(
\underbrace{X,\ldots,X}_{m-k},
\underbrace{Y,\ldots,Y}_{k}
\bigr)
\in
\operatorname{Sym}^{m-k}(\mathbb{R}^p).
\end{equation*}
For a symmetric tensor $T$, let $T\times v$ denote contraction of one
mode with a vector $v$, and let $T:Q$ denote contraction of two modes
with a symmetric matrix $Q$. More generally, we write $T[v_1,\ldots, v_k]$ as the contraction of $T$ along the first $k$ modes by vector $v_1,\ldots,v_k$.

Write
\[
K_U=\operatorname{cum}_m(U),
\qquad
K_V=\operatorname{cum}_m(V).
\]
Multilinearity and finite-order independence give
\begin{equation}
Z_k
=
K_U[\underbrace{\eta,\ldots,\eta}_{k}]
+
K_V[\underbrace{\beta,\ldots,\beta}_{k}].
\label{eq:multiv_Z_representation}
\end{equation}

\begin{theorem}[Multivariate cumulant recurrence]
\label{thm:multiv_recurrence}
Suppose \eqref{eq:multiv_two_component} holds and $U$, $V$, and
$\varepsilon$ are jointly independent through order $m$.  Let
\begin{equation*}
	s=\beta+\eta,
	\qquad
	Q=\frac{1}{2}
	\bigl(
	\beta\eta^\top+\eta\beta^\top
	\bigr).
\end{equation*}
Then, for $k=0,\ldots,m-3$,
\begin{equation}
	Z_{k+2}
	=
	Z_{k+1}\times s
	-
	Z_k:Q.
	\label{eq:multiv_recurrence}
\end{equation}
\end{theorem}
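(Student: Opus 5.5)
The plan is to use the representation \eqref{eq:multiv_Z_representation} and treat the two components separately. The right-hand side of \eqref{eq:multiv_recurrence} is linear in $Z_k$ and $Z_{k+1}$, so it suffices to show that each rank-structured term satisfies the recurrence on its own. The general situation is as follows. Let $K$ be a symmetric order-$m$ tensor and $w$ a vector, and set $T_k=K[w,\ldots,w]$ with $k$ copies of $w$. Then $T_k$ is a symmetric tensor of order $m-k$, and we want to show
\[
T_{k+2}=T_{k+1}\times s - T_k:Q
\qquad\text{whenever } w\in\{\eta,\beta\}.
\]
This is the tensor analogue of the scalar statement that $\theta_1,\theta_2$ are the roots of $t^2-(\theta_1+\theta_2)t+\theta_1\theta_2$.

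The key step is to evaluate the two contractions on a single term. Because $K$ is symmetric, the placement of contracted modes does not matter. Contracting one further mode of $T_{k+1}$ with $s=\beta+\eta$ gives
\[
T_{k+1}\times s = K[w,\ldots,w,\beta] + K[w,\ldots,w,\eta],
\]
with $k+1$ copies of $w$ in each term. Contracting two further modes of $T_k$ with $Q$ gives
\[
T_k:Q=\tfrac12\bigl(K[w,\ldots,w,\beta,\eta]+K[w,\ldots,w,\eta,\beta]\bigr)=K[w,\ldots,w,\beta,\eta],
\]
with $k$ copies of $w$; the two terms agree by symmetry of $K$, and the identity follows from multilinearity once $Q$ is written as a sum of rank-one pieces. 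We now specialize. If $w=\eta$, then
\[
T_{k+1}\times s - T_k:Q = K[\eta^{k+1},\beta]+K[\eta^{k+2}]-K[\eta^{k+1},\beta]=K[\eta^{k+2}]=T_{k+2}.
\]
The case $w=\beta$ is identical with the roles of $\beta$ and $\eta$ exchanged. Summing the identity for $(K_U,\eta)$ and for $(K_V,\beta)$, and using \eqref{eq:multiv_Z_representation}, gives \eqref{eq:multiv_recurrence}.

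Most of this is bookkeeping; the main point needing care is to pin down the contraction conventions. First, $Z_k$ must be identified with a symmetric tensor in $\operatorname{Sym}^{m-k}(\mathbb R^p)$ independently of which modes were contracted, which holds because $K_U$ and $K_V$ are symmetric cumulant tensors. Second, the range $k\le m-3$ is exactly what keeps $Z_k:Q$ meaningful, since $Z_k$ needs at least two remaining modes, and keeps $Z_{k+2}$ of order at least one. Third, the symmetrized $Q$ is used precisely so that $T_k:Q$ is well defined for a symmetric $T_k$ and equals $K[\ldots,\beta,\eta]$. I would also recall briefly why \eqref{eq:multiv_Z_representation} holds: expand each $Y$ argument as $\eta^\top U+\beta^\top V+\varepsilon$ and each $X$ as $U+V$. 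By multilinearity and independence through order $m$, only the cumulant terms made entirely of $U$ or entirely of $V$ survive. Terms involving $\varepsilon$ drop out because each contains at least one $X$ argument, so $\varepsilon$ is always mixed with $U$ or $V$; this uses $k\le m-1$.
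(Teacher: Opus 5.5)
Your proposal is correct and is essentially the paper's argument. Both expand $Z_k$ via the two-component representation, use symmetry of $K_U$ and $K_V$ to reduce $Z_k:Q$ to a single $(\beta,\eta)$ contraction, and cancel the cross terms. Verifying the identity separately for $(K_U,\eta)$ and $(K_V,\beta)$ and then summing by linearity is only a cosmetic reorganization of the same computation.
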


The important feature of
\Cref{thm:multiv_recurrence} is that the recurrence is linear in
$(s,Q)$.  Neither the recurrence itself nor the testing procedures
developed below require $(s,Q)$ to be identified. To fix ideas, for the remainder of this section we take $m=4$, so that
\begin{align}
Z_2
&=
Z_1\times s-Z_0:Q,
\label{eq:multiv_rec_1}\\
Z_3
&=
Z_2\times s-Z_1:Q.
\label{eq:multiv_rec_2}
\end{align}

\subsection{One effective coefficient vector}
\label{sec:multiv_one_vector}

Let
\[
\Sigma_U=\operatorname{Var}(U),
\qquad
\Sigma_V=\operatorname{Var}(V),
\qquad
\Sigma_X=\Sigma_U+\Sigma_V,
\]
and assume $\Sigma_X$ is nonsingular.  Define
\begin{equation*}
b
=
\beta_{\mathrm{LS}}
=
\Sigma_X^{-1}\operatorname{Cov}(X,Y),
\qquad
\delta=\eta-\beta.
\end{equation*}
Since
\[
\operatorname{Cov}(X,Y)
=
\Sigma_U\eta+\Sigma_V\beta,
\]
we have
\begin{equation}
b
=
\beta+A\delta,
\qquad
A=\Sigma_X^{-1}\Sigma_U.
\label{eq:multiv_barycenter}
\end{equation}
Consequently,
\begin{equation}
\eta-b=(I-A)\delta,
\qquad
\beta-b=-A\delta.
\label{eq:multiv_centered_slopes}
\end{equation}

Thus $b$ is the multivariate analogue of the scalar least-squares
barycenter.  The null hypothesis of a single effective coefficient
vector is
\begin{equation}
H_0:\quad \eta=\beta,
\label{eq:multiv_H0}
\end{equation}
under which $b=\eta=\beta$. Under \eqref{eq:multiv_H0}, the second-order recurrence in
\eqref{eq:multiv_recurrence} collapses to the first-order recurrence
\begin{equation*}
Z_{k+1}=Z_k\times b,
\qquad
k=0,1,2.
\end{equation*}
This already gives observable restrictions that do not involve
identification of $(s,Q)$. More specifically, let
\begin{equation*}
C(b)
=
Z_1-Z_0\times b
\in
\operatorname{Sym}^3(\mathbb{R}^p).
\end{equation*}
Then under $H_0$, $C(b)=0$. This is closest in spirit to the higher-order least-squares diagnostic of \citet{schultheiss2024}.

Another set of restriction follows directly from the two-coefficient recurrence given by Theorem \ref{thm:multiv_recurrence}.
Under \eqref{eq:multiv_H0}, $s=2b$ and $Q=bb^\top$,
or equivalently $ss^\top-4Q=0$.
This immediately implies that, under the one effective coefficient hypothesis $H_0$, $M_2(b)=0$ and $M_3(b)=0$ where
\begin{align*}
M_2(b)
&=
Z_2
-2Z_1\times b
+Z_0:(bb^\top)\in \operatorname{Sym}^2(\mathbb{R}^p),
\\
M_3(b)
&=
Z_3
-2Z_2\times b
+Z_1:(bb^\top)\in \mathbb{R}^p.
\end{align*}

\subsection{Linear and quadratic diagnostics}
\label{sec:multiv_diagnostics}

More generally, under the two coefficient vector model \eqref{eq:multiv_two_component}, $C(b)$, $M_2(b)$ and $M_3(b)$ are population signals of the separation between the two coefficient vectors, e.g., $\delta=\eta-\beta$.

\begin{proposition}[Observable restrictions under one effective coefficient vector]
\label{prop:multiv_testing_restrictions}
Suppose the assumptions of
\Cref{thm:multiv_recurrence} hold with $m=4$, and suppose
$\Sigma_X$ is nonsingular. Then
\begin{equation}
	C(b)
	=
	K_U\times\{(I-A)\delta\}
	-
	K_V\times(A\delta),
	\label{eq:multiv_C_signal}
\end{equation}
\begin{equation}
	M_2(b)
	=
	K_U[(I-A)\delta,(I-A)\delta]
	+
	K_V[A\delta,A\delta],
	\label{eq:multiv_M2_signal}
\end{equation}
and
\begin{equation}
	M_3(b)
	=
	K_U[\eta,(I-A)\delta,(I-A)\delta]
	+
	K_V[\beta,A\delta,A\delta],
	\label{eq:multiv_M3_signal}
\end{equation}
where $A$ and $\delta$ are given by \eqref{eq:multiv_barycenter}.
%
\end{proposition}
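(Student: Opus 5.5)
The proof is a direct substitution of the representation \eqref{eq:multiv_Z_representation} into each diagnostic, followed by multilinearity of the symmetric tensors $K_U,K_V$. With $m=4$, write $Z_0=K_U+K_V$, $Z_1=K_U\times\eta+K_V\times\beta$, $Z_2=K_U[\eta,\eta]+K_V[\beta,\beta]$ and $Z_3=K_U[\eta,\eta,\eta]+K_V[\beta,\beta,\beta]$. Contractions with $b$ or $bb^\top$ act mode by mode. Because $K_U$ and $K_V$ are symmetric, the order of the contracted modes is irrelevant. In particular, $Z_0:(bb^\top)=K_U[b,b]+K_V[b,b]$ and $Z_1\times b=K_U[\eta,b]+K_V[\beta,b]$, with the analogous formulas one order higher.

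\emph{Linear restriction.} We have $C(b)=K_U\times(\eta-b)+K_V\times(\beta-b)$. Substituting \eqref{eq:multiv_centered_slopes}, $\eta-b=(I-A)\delta$ and $\beta-b=-A\delta$, gives \eqref{eq:multiv_C_signal} immediately.

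\emph{Quadratic restriction $M_2$.} Collect the $K_U$ terms of $M_2(b)$:
\[
K_U[\eta,\eta]-2K_U[\eta,b]+K_U[b,b]=K_U[\eta-b,\eta-b],
\]
which is a completed square, valid by bilinearity and symmetry in the two contracted modes. The $K_V$ terms give $K_V[\beta-b,\beta-b]$ in the same way. Inserting \eqref{eq:multiv_centered_slopes} yields \eqref{eq:multiv_M2_signal}; the sign in $-A\delta$ disappears because it enters twice.

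\emph{Quadratic restriction $M_3$.} The $K_U$ part of $M_3(b)$ is
\[
K_U[\eta,\eta,\eta]-2K_U[\eta,\eta,b]+K_U[\eta,b,b].
\]
Fixing one mode at $\eta$, this equals $K_U[\eta,\cdot,\cdot]$ applied to $(\eta,\eta)-2(\eta,b)+(b,b)$, so it is $K_U[\eta,\eta-b,\eta-b]$. Similarly, the $K_V$ part is $K_V[\beta,\beta-b,\beta-b]$. Substitution gives \eqref{eq:multiv_M3_signal}.

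No genuine obstacle arises. The only points needing care are the following. First, contraction of $Z_1$ (order three) by $b$, or of $Z_0$ by $bb^\top$, must produce tensors of the stated orders, namely $\operatorname{Sym}^3$, $\operatorname{Sym}^2$ and $\mathbb{R}^p$. Second, symmetry of $K_U$ and $K_V$ is what justifies factoring out the fixed $\eta$ or $\beta$ slot in the $M_3$ computation. Third, nonsingularity of $\Sigma_X$ is used only to define $b$ and $A$, through \eqref{eq:multiv_barycenter}.
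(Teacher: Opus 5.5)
Your proof is correct and follows essentially the same route as the paper: substitute the representation \eqref{eq:multiv_Z_representation}, use multilinearity and symmetry of $K_U,K_V$ to reduce each diagnostic to $K_U$- and $K_V$-terms in $\eta-b$ and $\beta-b$, and then insert \eqref{eq:multiv_centered_slopes}. The paper packages this through the identity $Z_{k+1}-Z_k\times b=K_U\times\eta^{\otimes k}\times(\eta-b)+K_V\times\beta^{\otimes k}\times(\beta-b)$, and your explicit completed-square computations for $M_2$ and $M_3$ simply spell out the step it summarizes as ``applying the same identity once more.''
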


\Cref{prop:multiv_testing_restrictions} separates two types of
information.  The diagnostic $C(b)$ has a signal that is linear
in $\delta$, whereas $M_2(b)$ and $M_3(b)$ have signals that are
quadratic in $\delta$.  The latter are the direct multivariate
analogues of the scalar determinant
\[
z_0z_2-z_1^2,
\]
whose signal is quadratic in the separation between two scalar
slopes.

As in the scalar case, the restrictions are statements about the
fourth-order cumulant structure.  A component that is invisible to
fourth-order cumulants, or an exact cancellation between the two
terms in
\eqref{eq:multiv_C_signal}--\eqref{eq:multiv_M3_signal}, may make a
particular diagnostic insensitive to an alternative.  Thus failure
to reject should not, without further assumptions, be interpreted as
proof that no latent component is present.

\subsection{A multivariate test that does not require identification}
\label{sec:multiv_test}

Let
\[
\widehat b
=
\widehat\Sigma_X^{-1}
\widehat{\operatorname{Cov}}(X,Y)
\]
be the ordinary least-squares estimator.  Replace $Z_k$ by the
empirical fourth-order mixed-cumulant tensors $\widehat Z_k$ and
define
\begin{align*}
\widehat M_2
&=
\widehat Z_2
-2\widehat Z_1\times\widehat b
+\widehat Z_0:
(\widehat b\widehat b^\top),\\
\widehat M_3
&=
\widehat Z_3
-2\widehat Z_2\times\widehat b
+\widehat Z_1:
(\widehat b\widehat b^\top).
\end{align*}

Let $\operatorname{vec}(T)$ collect the distinct coordinates of a
symmetric tensor $T$, and put
\begin{equation*}
\psi_Q(P)
=
\begin{pmatrix}
	\operatorname{vec}\{M_2(b)\}\\
	M_3(b)
\end{pmatrix},
\qquad
\widehat\psi_Q
=
\begin{pmatrix}
	\operatorname{vec}(\widehat M_2)\\
	\widehat M_3
\end{pmatrix}.
\end{equation*}
The dimension of $\psi_Q$ is
\[
d_Q
=
\frac{p(p+1)}{2}+p
=
\frac{p(p+3)}{2}.
\]

\begin{theorem}[Test of one effective fourth-order coefficient vector]
\label{thm:multiv_wald_test}
Suppose \eqref{eq:multiv_two_component} holds,
\[
\mathbb{E}\bigl(\|X\|+|Y|\bigr)^8<\infty,
\]
and $\Sigma_X$ is nonsingular.  
\begin{itemize}
	\item[(a)] Under
$H_0:\eta=\beta$,
\begin{equation}
	\sqrt n\,\widehat\psi_Q
	\ \xrightarrow{d}\
	N(0,\Omega_Q),
	\label{eq:multiv_psi_clt}
\end{equation}
where $\Omega_Q$ is obtained by applying the delta method to the
empirical raw moments through total degree four.	
	\item[(b)] 	Let $\widehat\Omega_Q$ be a consistent estimator of $\Omega_Q$.
If $\operatorname{rank}(\Omega_Q)=r_Q$ is locally constant, then
\begin{equation}
	T_{Q,n}
	=
	n
	\widehat\psi_Q^\top
	\widehat\Omega_Q^\dagger
	\widehat\psi_Q
	\ \xrightarrow{d}\
	\chi^2_{r_Q},
	\label{eq:multiv_Q_stat}
\end{equation}
under $H_0$, where $\dagger$ denotes the Moore--Penrose inverse.
\item[(c)] Under a fixed alternative $P$, let $\Omega_Q(P)$ denote the probability limit of
$\widehat\Omega_Q$ and suppose its rank is locally constant. If
\begin{equation}
\psi_Q(P)^\top\Omega_Q(P)^\dagger\psi_Q(P)>0,
\label{eq:multiv_visible_alt}
\end{equation}
then
\[
T_{Q,n}\xrightarrow{p}\infty.
\]
\end{itemize}
\end{theorem}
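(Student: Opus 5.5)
The approach is the delta method applied to a smooth function of empirical raw moments, followed by the standard continuous-mapping argument for a Wald statistic with a generalized inverse. The key observation is that $\widehat\psi_Q=g(\widehat M)$. Here $\widehat M$ is the vector of empirical raw moments $\mathbb E_n[X^\alpha Y^j]$ with $|\alpha|+j\le4$. The map $g$ is a polynomial in the cumulant coordinates composed with the rational map $\widehat b=\widehat\Sigma_X^{-1}\widehat{\operatorname{Cov}}(X,Y)$. It is therefore smooth on a neighbourhood of the population moment vector $M$, because $\Sigma_X$ is nonsingular. Under $\mathbb E(\|X\|+|Y|)^8<\infty$, the multivariate CLT gives $\sqrt n(\widehat M-M)\to N(0,\Sigma_M)$, exactly as in \eqref{eq:z-clt}. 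If unbiased $k$-statistics are used instead, the $O_{\Pp}(n^{-1})$ difference is negligible.

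For part (a), under $H_0$ the remarks after \eqref{eq:multiv_H0} give $s=2b$ and $Q=bb^\top$. Theorem~\ref{thm:multiv_recurrence}, or directly \Cref{prop:multiv_testing_restrictions} with $\delta=0$, then yields $M_2(b)=0$ and $M_3(b)=0$, so $g(M)=\psi_Q(P)=0$. The delta method gives $\sqrt n\,\widehat\psi_Q=\sqrt n\{g(\widehat M)-g(M)\}\to N(0,Dg(M)\Sigma_M Dg(M)^\top)$, and we define $\Omega_Q$ to be this covariance. Two small points need checking. The first is that $\Sigma_M$ involves moments of degree at most eight, which the moment assumption covers. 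The second is that $Dg$ exists; this follows from the chain rule, with $D\widehat b$ computed from $D(\Sigma^{-1})=-\Sigma^{-1}(\cdot)\Sigma^{-1}$.

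For part (b), write $\xi_n=\sqrt n\,\widehat\psi_Q\to\xi\sim N(0,\Omega_Q)$. The main obstacle is that the Moore--Penrose inverse is discontinuous, so $\widehat\Omega_Q^\dagger\to\Omega_Q^\dagger$ does not follow from consistency alone. I would use the rank hypothesis as follows. I read ``locally constant rank'' as meaning that $\operatorname{rank}(\widehat\Omega_Q)=r_Q$ with probability tending to one. The standard perturbation result then states that $A_n\to A$ with $\operatorname{rank}A_n=\operatorname{rank}A$ implies $A_n^\dagger\to A^\dagger$. Hence $\widehat\Omega_Q^\dagger\to_p\Omega_Q^\dagger$.

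Slutsky's lemma then gives $T_{Q,n}=\xi_n^\top\widehat\Omega_Q^\dagger\xi_n\to\xi^\top\Omega_Q^\dagger\xi$. To identify the limit, write $\xi=\Omega_Q^{1/2}N$ with $N\sim N(0,I)$. Then $\xi^\top\Omega_Q^\dagger\xi=N^\top\Omega_Q^{1/2}\Omega_Q^\dagger\Omega_Q^{1/2}N=N^\top\Pi N$, where $\Pi$ is the orthogonal projection onto the range of $\Omega_Q$. This projection has rank $r_Q$, so the limit is $\chi^2_{r_Q}$.

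For part (c), under a fixed alternative the same smoothness argument gives $\widehat\psi_Q\to_p\psi_Q(P)$. The rank condition again gives $\widehat\Omega_Q^\dagger\to_p\Omega_Q(P)^\dagger$. Consequently $T_{Q,n}/n\to_p\psi_Q(P)^\top\Omega_Q(P)^\dagger\psi_Q(P)>0$ by \eqref{eq:multiv_visible_alt}, and so $T_{Q,n}\to_p\infty$. Note that part (c) does not require the alternative to satisfy $\psi_Q(P)\in\operatorname{range}\Omega_Q(P)$. Only the positivity of the quadratic form matters, and that is exactly what \eqref{eq:multiv_visible_alt} assumes.
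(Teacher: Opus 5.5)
Your proposal is correct and follows essentially the paper's route. The paper's proof likewise uses a raw-moment CLT under the eighth-moment condition, the delta method through the smooth map to $\widehat\psi_Q$ (which vanishes under $H_0$ by \Cref{prop:multiv_testing_restrictions} with $\delta=0$), a stable-rank argument for the Moore--Penrose inverse giving $\chi^2_{r_Q}$, and $T_{Q,n}/n\to\psi_Q(P)^\top\Omega_Q(P)^\dagger\psi_Q(P)>0$ under visible alternatives. You make explicit two steps the paper leaves implicit: reading ``locally constant rank'' as $\operatorname{rank}(\widehat\Omega_Q)=r_Q$ with probability tending to one, and the identity $\Omega_Q^{1/2}\Omega_Q^\dagger\Omega_Q^{1/2}=\Pi$.
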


It is worth pointing out that neither the null distribution nor consistency against such visible fixed
alternatives in Theorem \ref{thm:multiv_wald_test} requires the recurrence operator
$\mathcal L_Z$ to be injective. In fact, under the assumptions of \Cref{thm:multiv_wald_test}, if
$\mathcal L_Z$ is noninjective and the fixed alternative satisfies the
visibility condition \eqref{eq:multiv_visible_alt}, then the unordered pair
$\{\beta,\eta\}$ need not be identified by the fourth-order recurrence,
whereas
\[
T_{Q,n}\xrightarrow{p}\infty.
\]
Thus multiplicity of effective coefficient vectors can be consistently
detectable even when the coefficient vectors themselves are not identified.

The covariance calculation in
\Cref{thm:multiv_wald_test} is the direct multivariate analogue of
the scalar raw-moment delta-method calculation in \Cref{sec:fourth}: collect the
monomials in $(X^\top,Y)^\top$ through total degree four, estimate
their covariance, and propagate it through the smooth map defining
$\widehat\psi_Q$.

One may analogously construct a first-order statistic from
\[
\psi_L(P)
=
\operatorname{vec}\{C(b)\},
\]
or an omnibus statistic obtained by stacking
$\psi_L$ and $\psi_Q$.  We return to the distinction between these
statistics below.

\subsection{Local alternatives and two detection scales}
\label{sec:multiv_local}

The quadratic form of
\eqref{eq:multiv_M2_signal}--\eqref{eq:multiv_M3_signal} produces the
same $n^{-1/4}$ local scale that appears in the scalar determinant
diagnostic.

Consider a triangular array such that
\begin{equation}
\delta_n
=
\eta_n-\beta_n
=
n^{-1/4}h,
\label{eq:multiv_local_delta}
\end{equation}
where $h\in\mathbb{R}^p$ is fixed.  Suppose
\[
\beta_n\to\theta,
\qquad
\eta_n\to\theta,
\qquad
A_n\to A,
\]
and the fourth-order cumulant tensors converge to $K_U$ and $K_V$.
Define
\[
u=(I-A)h,
\qquad
v=Ah.
\]
Then
\begin{align}
\sqrt n\,M_{2,n}
&\longrightarrow
K_U[u,u]+K_V[v,v],
\notag\\
\sqrt n\,M_{3,n}
&\longrightarrow
K_U[\theta,u,u]+K_V[\theta,v,v].
\label{eq:multiv_local_M3}
\end{align}
The expression for $M_{3,n}$ also contains a term cubic in $\delta_n$ through
$\eta_n=\beta_n+\delta_n$.  Under \eqref{eq:multiv_local_delta} that term is
$O(n^{-3/4})$, so it vanishes after multiplication by $\sqrt n$ and does not
alter the local limit above.

Let
\begin{equation*}
\mu_Q(h)
=
\begin{pmatrix}
	\operatorname{vec}
	\left\{
	K_U[u,u]+K_V[v,v]
	\right\}\\[3pt]
	K_U[\theta,u,u]+K_V[\theta,v,v]
\end{pmatrix}.
\end{equation*}

\begin{proposition}[$n^{-1/4}$ local separation]
\label{prop:multiv_local_power}
Suppose the assumptions of
\Cref{thm:multiv_wald_test} hold uniformly along the triangular array
\eqref{eq:multiv_local_delta}, and suppose
$\Omega_{Q,n}\to\Omega_{Q,0}$ with locally constant rank $r_Q$.
Then
\begin{equation}
	\sqrt n\,\widehat\psi_Q
	\ \xrightarrow{d}\
	N\bigl(
	\mu_Q(h),
	\Omega_{Q,0}
	\bigr).
	\label{eq:multiv_local_normal}
\end{equation}
If $\mu_Q(h)$ belongs to the column space of $\Omega_{Q,0}$, then
\begin{equation*}
	T_{Q,n}
	\ \xrightarrow{d}\
	\chi^2_{r_Q}
	\bigl(
	\lambda_Q(h)
	\bigr),
\end{equation*}
where
\begin{equation*}
	\lambda_Q(h)
	=
	\mu_Q(h)^\top
	\Omega_{Q,0}^\dagger
	\mu_Q(h).
\end{equation*}
Hence the quadratic recurrence diagnostic has nontrivial local power
against $n^{-1/4}$ separation whenever
$\lambda_Q(h)>0$.
\end{proposition}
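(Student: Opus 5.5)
The plan is to decompose $\sqrt n\,\widehat\psi_Q$ into a population drift term and a centered fluctuation term. Let $\psi_{Q,n}=\psi_Q(P_n)$ denote the population value along the array, that is, $(\operatorname{vec}\{M_{2,n}(b_n)\},M_{3,n}(b_n))$ evaluated at the population least-squares coefficient $b_n$. Write
\[
\sqrt n\,\widehat\psi_Q=\sqrt n\,\psi_{Q,n}+\sqrt n\,(\widehat\psi_Q-\psi_{Q,n}).
\]
For the drift, I apply \Cref{prop:multiv_testing_restrictions} at each $n$. With $\delta_n=n^{-1/4}h$ it gives $\sqrt n\,M_{2,n}=K_{U,n}[(I-A_n)h,(I-A_n)h]+K_{V,n}[A_nh,A_nh]$ exactly. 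For $M_{3,n}$ it gives $K_{U,n}[\eta_n,(I-A_n)h,(I-A_n)h]+K_{V,n}[\beta_n,A_nh,A_nh]$. Using $K_{U,n}\to K_U$, $K_{V,n}\to K_V$, $A_n\to A$ and $\beta_n,\eta_n\to\theta$ together with continuity of multilinear contraction, the drift converges to $\mu_Q(h)$. The remark after \eqref{eq:multiv_local_M3} about the cubic term is already absorbed here, because $\eta_n$ appears only as an argument and converges.

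For the fluctuation, I note that $\widehat\psi_Q=g(\widehat M_4)$ for a fixed smooth map $g$ on raw moments of $(X^\top,Y)$ through degree four. The map $g$ is built from cumulant polynomials, from $\widehat b=\widehat\Sigma_X^{-1}\widehat{\Cov}(X,Y)$, and from tensor contractions, and it is smooth wherever $\Sigma_X$ is nonsingular. The assumed uniform CLT gives $\sqrt n(\widehat M_4-M_{4,n})\Rightarrow N(0,\Sigma_{M,0})$, where $M_{4,n}\to M_{4,0}$ by the convergence of the model parameters and the uniform moment bounds. A uniform delta method then applies: a first-order Taylor expansion with Jacobian $Dg(M_{4,n})\to Dg(M_{4,0})$, plus a remainder of order $O_{\Pp}(n^{-1/2})$ that is uniform on a neighborhood, yields
\[
\sqrt n(\widehat\psi_Q-\psi_{Q,n})\Rightarrow N(0,\Omega_{Q,0}).
\]
Slutsky's lemma then gives \eqref{eq:multiv_local_normal}.

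For the chi-square limit, consistency of $\widehat\Omega_Q$ along the array (the plug-in converges to $\Omega_{Q,0}$) and the locally constant rank $r_Q$ imply $\widehat\Omega_Q^\dagger\to\Omega_{Q,0}^\dagger$ in probability. Without the rank condition the Moore--Penrose inverse is discontinuous. The continuous mapping theorem then gives $T_{Q,n}\Rightarrow W^\top\Omega_{Q,0}^\dagger W$ with $W\sim N(\mu_Q(h),\Omega_{Q,0})$. To identify the law, write $\Omega_{Q,0}=LL^\top$ with $L$ of full column rank $r_Q$, so that $W=\mu_Q(h)+L\xi$ with $\xi\sim N(0,I_{r_Q})$. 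Since $\mu_Q(h)$ lies in the column space, $\mu_Q(h)=L\alpha$ for some $\alpha$, and $L^\top\Omega_{Q,0}^\dagger L=I_{r_Q}$. Hence $W^\top\Omega_{Q,0}^\dagger W=\|\alpha+\xi\|^2\sim\chi^2_{r_Q}(\|\alpha\|^2)$, and $\|\alpha\|^2=\mu_Q^\top\Omega_{Q,0}^\dagger\mu_Q=\lambda_Q(h)$. Nontrivial local power follows because a noncentral $\chi^2$ with $\lambda_Q(h)>0$ stochastically dominates the central one.

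I expect the main obstacle to be the uniformity of the delta-method step. It requires the Taylor remainder to be controlled uniformly along the array, and it requires $\widehat\Omega_Q$ to remain consistent for $\Omega_{Q,0}$ under the drifting laws. I would handle this by invoking the hypothesis that the assumptions of \Cref{thm:multiv_wald_test} hold uniformly, specifically uniformly bounded eighth moments and $\Sigma_{X,n}$ bounded away from singularity, together with the uniform local Lipschitz continuity of $Dg$ on a compact neighborhood of $M_{4,0}$.
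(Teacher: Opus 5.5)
Your proposal is correct and follows the paper's proof: you obtain the deterministic drift $\sqrt n\,\psi_Q(P_n)\to\mu_Q(h)$ from \Cref{prop:multiv_testing_restrictions}, the centered fluctuation from the assumed uniform CLT plus a delta expansion, and the noncentral $\chi^2_{r_Q}$ limit from the same projection argument as in \Cref{thm:multiv_wald_test}. Your handling of the $M_3$ drift is slightly cleaner than the paper's cubic-remainder bookkeeping, because you keep $\eta_n$ inside the contraction and use only $\eta_n\to\theta$; you also write out the rank-factorization step $\Omega_{Q,0}=LL^\top$ that the paper leaves implicit.
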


This result concerns detection of coefficient-vector multiplicity,
not estimation of $\beta$ and $\eta$ separately.  In particular,
the $n^{-1/4}$ scale does not imply an $n^{-1/4}$ estimation rate for
either coefficient vector.

The linear diagnostic $C(b)$ behaves differently.  Define the linear
map
\begin{equation*}
\mathcal B h
=
K_U\times\{(I-A)h\}
-
K_V\times(Ah).
\end{equation*}
Under
\[
\delta_n=h\,n^{-1/2},
\]
we have
\[
\sqrt n\,C(b_n)
\longrightarrow
\mathcal B h.
\]
Thus the first-order diagnostic has a nonzero $n^{-1/2}$ local signal whenever $\mathcal B h\neq0$.  There may, however, exist directions
$h\neq0$ for which
\begin{equation*}
\mathcal B h=0.
\end{equation*}
Along such first-order cancellation directions, the linear
diagnostic has no local signal.  If at the same time
\begin{equation*}
K_U[(I-A)h,(I-A)h]
+
K_V[Ah,Ah]
\neq0,
\end{equation*}
the quadratic recurrence diagnostic remains informative on the
larger $n^{-1/4}$ scale.

The linear and quadratic diagnostics therefore capture complementary
features of the multivariate latent-slope structure.  An omnibus
procedure may stack both sets of restrictions, while the quadratic
statistic is especially useful for making the connection with the
two-coefficient recurrence and with the scalar weak-separation
phenomenon.

\begin{example}[First-order cancellation with a nonzero quadratic signal]
\label{ex:multiv_cancellation}
Suppose the coordinates of $U$ and $V$ are independent, have common
standardized fourth cumulant $\kappa\neq0$, and
\[
\Sigma_U=\rho I_p,
\qquad
\Sigma_V=(1-\rho)I_p.
\]
Then $A=\rho I_p$ and
\[
K_U=\kappa\rho^2\sum_{j=1}^p e_j^{\otimes4},
\qquad
K_V=\kappa(1-\rho)^2\sum_{j=1}^p e_j^{\otimes4}.
\]
Consequently,
\begin{equation*}
C(b)
=
\kappa\rho(1-\rho)(2\rho-1)
\sum_{j=1}^p \delta_j e_j^{\otimes3}.
\end{equation*}
At $\rho=1/2$, the linear diagnostic therefore vanishes for every
$\delta$, while
\begin{equation*}
M_2(b)
=
\frac{\kappa}{8}
\sum_{j=1}^p \delta_j^2 e_j^{\otimes2},
\end{equation*}
which is nonzero for every $\delta\neq0$.  Thus the quadratic recurrence
restriction can detect coefficient heterogeneity in directions for which the
first-order fourth-cumulant diagnostic is identically blind.
\end{example}

\subsection{Connection with omitted-variable bias}
\label{sec:multiv_ovb}

The testing problem has a direct omitted-variable interpretation
under \eqref{eq:multiv_ovb_model}.  Let
\[
\Sigma_H=\operatorname{Var}(H)
\]
and suppose
\[
\Delta^\top\Delta=I_q,
\qquad
\Sigma_H\succ0,
\qquad
\Sigma_X\succ0.
\]

\begin{corollary}[Structural equivalence of no confounding and no omitted-variable bias]
\label{cor:multiv_ovb}
Under \eqref{eq:multiv_ovb_model},
testing $H_0:\eta=\beta$ is equivalent to testing absence of omitted-variable bias, e.g., $\gamma=0$.
\end{corollary}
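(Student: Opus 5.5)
Under the identification $U=\Delta H$, $V=\nu$, $\eta=\beta+\Delta\gamma$ stated after \eqref{eq:multiv_ovb_model}, the corollary amounts to showing that $\eta=\beta$ holds if and only if $\gamma=0$, and that this is in turn equivalent to $b=\beta$, i.e.\ to vanishing omitted-variable bias $b-\beta$. Once these equivalences are in hand, the null \eqref{eq:multiv_H0} and the hypothesis ``no omitted-variable bias'' describe the same subset of models within \eqref{eq:multiv_ovb_model}. Any test of one against the other, such as the one in \Cref{thm:multiv_wald_test}, is then a test of the other.

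\emph{Step 1.} Since $\delta=\eta-\beta=\Delta\gamma$ and $\Delta^\top\Delta=I_q$, we get $\gamma=\Delta^\top\delta$. Hence $\delta=0$ if and only if $\gamma=0$, so $\Delta$ is injective and $H_0:\eta=\beta$ is exactly $\gamma=0$.

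\emph{Step 2.} By \eqref{eq:multiv_barycenter}, the omitted-variable bias is
\[
b-\beta=A\delta=\Sigma_X^{-1}\Sigma_U\Delta\gamma=\Sigma_X^{-1}\Delta\Sigma_H\Delta^\top\Delta\gamma=\Sigma_X^{-1}\Delta\Sigma_H\gamma,
\]
using $\Sigma_U=\Delta\Sigma_H\Delta^\top$. Now $\Sigma_X^{-1}$ is invertible, $\Delta$ has full column rank, and $\Sigma_H\succ0$. So the composite map $\gamma\mapsto\Sigma_X^{-1}\Delta\Sigma_H\gamma$ is injective, and $b=\beta$ if and only if $\gamma=0$.

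Combining the two steps: under $H_0$ we have $b=\eta=\beta$, so there is no bias. Conversely, zero bias forces $\gamma=0$, hence $\eta=\beta$.

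The main subtlety is the injectivity argument in Step 2. One must check that $A\delta$ cannot vanish for $\delta\ne0$ in the range of $\Delta$. This is exactly where $\Sigma_H\succ0$ and $\Delta^\top\Delta=I_q$ are used, and everything else is bookkeeping.

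I would also add a remark that the equivalence is structural. It identifies the null sets, but detection still requires the visibility condition \eqref{eq:multiv_visible_alt} on the fourth-order cumulants, since a $\gamma\ne0$ with invisible fourth-order signal would not be detected.
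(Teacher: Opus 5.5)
Your proof is correct and follows essentially the paper's route: you get $\eta=\beta\iff\gamma=0$ from $\Delta^\top\Delta=I_q$, then show $b-\beta=\Sigma_X^{-1}\Delta\Sigma_H\gamma$, which is injective in $\gamma$ because $\Delta$ has full column rank and $\Sigma_H\succ0$. The only cosmetic difference is that you derive the bias formula from the barycenter identity $b=\beta+A\delta$ with $\Sigma_U=\Delta\Sigma_H\Delta^\top$, whereas the paper computes $\Cov(X,Y)=\Sigma_X\beta+\Delta\Sigma_H\gamma$ directly.
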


The structural qualification in
\Cref{cor:multiv_ovb} is important.  The corollary identifies the structural null
$\eta=\beta$ with absence of omitted-variable bias; it does not assert that the
quadratic restrictions characterize that null without further fourth-order
nondegeneracy.  Outside the latent linear model, failure of the fourth-order
restrictions may also be produced by nonlinearity, heteroskedasticity, or other
misspecification. Conversely, a latent component whose relevant fourth-order
cumulant vanishes, or whose contribution cancels another component, may be
invisible to a test based on $T_{Q, n}$, hereafter referred to as TEST Q.  We therefore interpret the unconditional statistical null
as one effective fourth-order coefficient vector and invoke the omitted-variable
interpretation only under the structural assumptions above.  Within that model,
TEST Q is consistent against alternatives satisfying
\eqref{eq:multiv_visible_alt}.

\subsection{Full identification is a stronger problem}
\label{sec:multiv_identification}

Testing whether $\eta=\beta$ does not require recovery of either
coefficient vector separately.  Full identification is a stronger
question.

At fourth order, define
\begin{equation*}
\mathcal L_Z(s,Q)
=
\left(
Z_1\times s-Z_0:Q,\,
Z_2\times s-Z_1:Q
\right).
\end{equation*}
The recurrence equations require
\[
\mathcal L_Z(s,Q)=(Z_2,Z_3).
\]
\begin{theorem}[Full identification under injectivity]
\label{thm:multiv_identification}
Suppose the assumptions of \Cref{thm:multiv_recurrence} hold with $m=4$.
If $\mathcal L_Z$ is injective, then $(s,Q)$ is uniquely identified from
$(Z_0,Z_1,Z_2,Z_3)$.  Moreover,
\begin{equation*}
ss^\top-4Q
=
(\beta-\eta)(\beta-\eta)^\top.
\end{equation*}
If $\beta\neq\eta$, let $d$ satisfy
\begin{equation}
dd^\top
=
ss^\top-4Q.
\label{eq:multiv_d_factor}
\end{equation}
Then
\begin{equation}
\{\beta,\eta\}
=
\left\{
\frac{s+d}{2},
\frac{s-d}{2}
\right\}.
\label{eq:multiv_beta_eta_recovery}
\end{equation}
\end{theorem}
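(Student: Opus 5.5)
The plan is to combine \Cref{thm:multiv_recurrence} with linearity of $\mathcal L_Z$, and then finish with elementary algebra on $s$ and $Q$.

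\emph{Step 1 (uniqueness of $(s,Q)$).} By \Cref{thm:multiv_recurrence} with $m=4$, the true pair $(s,Q)$ built from $(\beta,\eta)$ satisfies \eqref{eq:multiv_rec_1}--\eqref{eq:multiv_rec_2}, that is, $\mathcal L_Z(s,Q)=(Z_2,Z_3)$. The map $\mathcal L_Z$ is linear on the domain $\R^p\times\operatorname{Sym}^2(\R^p)$, since contraction is linear in $s$ and in $Q$. Suppose $(s',Q')$ is another solution. Subtracting the two equations gives $\mathcal L_Z(s-s',Q-Q')=0$, and injectivity then forces $(s',Q')=(s,Q)$. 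Because $\mathcal L_Z$ is defined from $(Z_0,Z_1,Z_2)$ alone, $(s,Q)$ is a function of $(Z_0,Z_1,Z_2,Z_3)$.

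One point needs care here. The domain must be taken as symmetric $Q$, because $Z_k:Q$ only sees the symmetric part of $Q$. Injectivity is therefore understood on that domain, and the true $Q$ defined in \Cref{thm:multiv_recurrence} is symmetric by construction, so it lies in the domain.

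\emph{Step 2 (the identity for $ss^\top-4Q$).} Expand directly:
\[
ss^\top=\beta\beta^\top+\beta\eta^\top+\eta\beta^\top+\eta\eta^\top,
\qquad
4Q=2\beta\eta^\top+2\eta\beta^\top.
\]
Subtracting gives $ss^\top-4Q=\beta\beta^\top-\beta\eta^\top-\eta\beta^\top+\eta\eta^\top=(\beta-\eta)(\beta-\eta)^\top$.

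\emph{Step 3 (recovery of the pair).} If $\beta\neq\eta$, the matrix $ss^\top-4Q$ is positive semidefinite of rank one. Any $d$ with $dd^\top=(\beta-\eta)(\beta-\eta)^\top$ must then be $d=\pm(\beta-\eta)$. To see this, note that equality of rank-one matrices forces $d=c(\beta-\eta)$ for some scalar $c$, and then $c^2=1$. Take $d=\beta-\eta$. Then $(s+d)/2=\beta$ and $(s-d)/2=\eta$. Taking $d=-(\beta-\eta)$ merely swaps the two. Either way the unordered pair in \eqref{eq:multiv_beta_eta_recovery} equals $\{\beta,\eta\}$. Since $s$ and $Q$ are identified by Step 1, the pair is identified as well.

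The only potentially delicate step is Step 1, specifically stating the domain of $\mathcal L_Z$ correctly so that injectivity applies to the true symmetric $Q$. Everything else is routine algebra.
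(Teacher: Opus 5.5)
Your proof is correct and follows essentially the same route as the paper: existence of a solution comes from \Cref{thm:multiv_recurrence}, injectivity (via linearity of $\mathcal L_Z$) gives uniqueness of $(s,Q)$, direct expansion gives $ss^\top-4Q=(\beta-\eta)(\beta-\eta)^\top$, and the rank-one factorization forces $d=\pm(\beta-\eta)$, which yields the unordered pair. Your remark that injectivity must be understood on the domain of symmetric $Q$ is a useful clarification that the paper leaves implicit; it is consistent with the paper's count of $9$ coordinates for $(s,Q)$ when $p=3$.
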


Thus injectivity permits recovery of the unordered pair of coefficient
vectors.  It is not, however, required for the testing results in
\Cref{thm:multiv_wald_test,prop:multiv_local_power}.  When
$\mathcal L_Z$ is noninjective, different pairs $(s,Q)$ may satisfy
the recurrence, so an arbitrary minimum-norm solution need not
recover the true $\{\beta,\eta\}$.  Nevertheless, the one-vector null
still imposes the directly observable restrictions $C(b)=0$, $M_2(b)=0$ and $M_3(b)=0$.

This distinction mirrors, and strengthens, the scalar separation
between detectability and estimation.  In the scalar problem,
multiplicity can be detectable when the individual roots are weakly
identified.  In multiple regression, multiplicity can remain
detectable even when the individual coefficient vectors are not
identified at all.

\section{Numerical experiments}
\label{sec:simulation}
The simulations are intended to illustrate the statistical boundaries of the framework rather than to stage a broad estimator comparison.  They are organized around four questions that mirror the main theory: whether the scalar spectrum can be estimated under regular identification, what happens as two slopes coalesce, what happens when fourth-order information itself becomes weak, and whether multivariate slope multiplicity can be detected without identifying the individual coefficient vectors.  Unless noted, each design uses $2{,}000$ Monte Carlo replications; dedicated null-calibration runs use $5{,}000$. For scalar root estimators we report median matched error together with RMSE and the real-root probability, because a small number of near-singular samples can dominate mean squared error.

\paragraph{Regular scalar identification.}
Data are generated from
\[
X=H+\nu,
\qquad
Y=\beta X+\gamma H+\varepsilon,
\qquad
H=\sqrt\rho\,H_0,
\qquad
\nu=\sqrt{1-\rho}\,\nu_0,
\]
where $\varepsilon\sim N(0,\sigma_\varepsilon^2)$ is independent of $(H_0,\nu_0)$, $H_0$ and $\nu_0$ have unit variance, and $\beta=1$.  The regular grid uses $\gamma\in\{0.25,0.5,1\}$, $\rho\in\{0.1,0.3,0.5,0.7,0.9\}$, $\sigma_\varepsilon\in\{0.5,1,2\}$, $n\in\{250,500,1000,2500,5000\}$, and Laplace/Laplace, Laplace/Rademacher, and Rademacher/Rademacher factor pairs.

\Cref{fig:barycenter} illustrates the basic phenomenon for Rademacher factors with $\gamma=1$, $\rho=0.5$, and $\sigma_\varepsilon=1$: least squares concentrates around the barycenter $\beta_{\LS}=1.5$, whereas the fourth-order estimator separates the two latent slopes $1$ and $2$.  At $n=1000$, all $2{,}000$ replications return two real roots, the median matched error is $0.184$, the matched-pair RMSE is $0.263$, the delta-method standard error for $\theta_-$ is $1.002$ times its Monte Carlo standard deviation, and the nominal $95\%$ Wald interval covers in $94.4\%$ of replications.  At $n=5000$ the median error falls to $0.079$.

\begin{figure}[ht]
\centering
\includegraphics[width=0.9\textwidth]{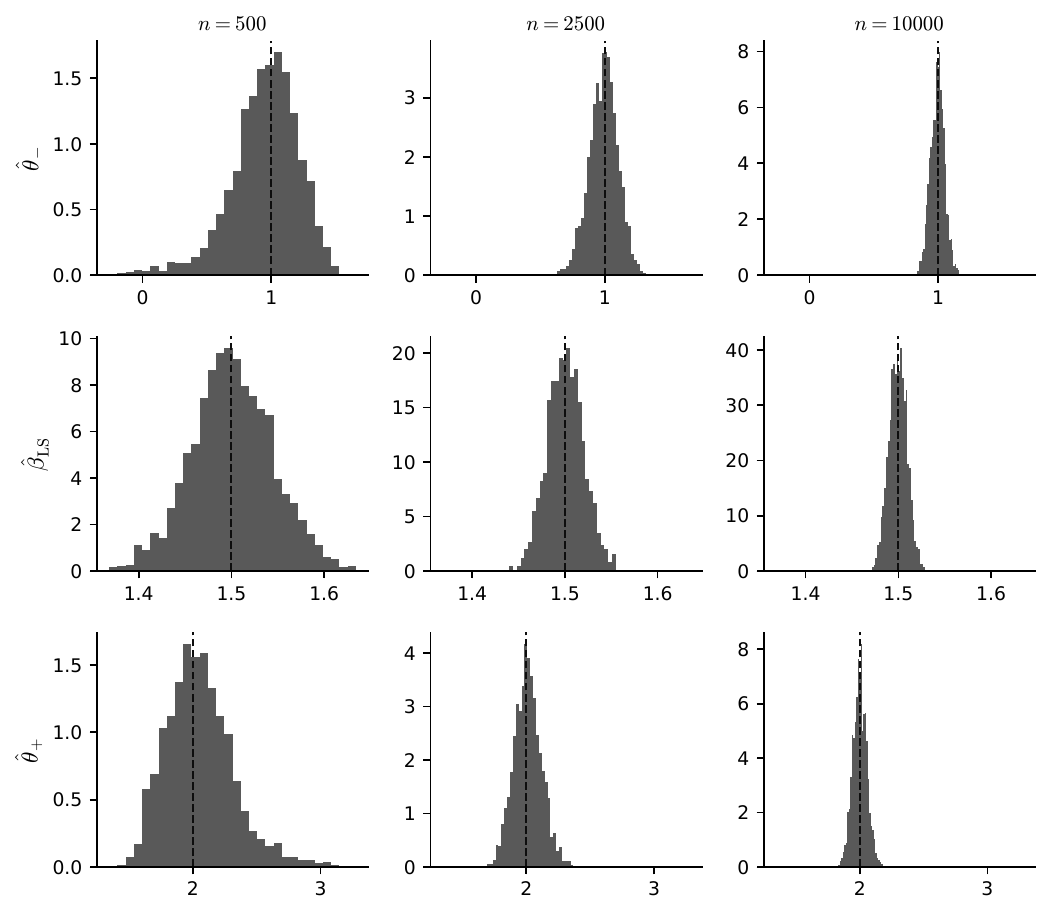}
\caption{Sampling distributions of $\widehat\theta_-$, $\widehat\beta_{\LS}$, and $\widehat\theta_+$ for Rademacher factors, $\gamma=1$, $\rho=0.5$, and $\sigma_\varepsilon=1$.  Dashed lines mark $\theta_-=1$, $\beta_{\LS}=1.5$, and $\theta_+=2$.}
\label{fig:barycenter}
\end{figure}

The same pointwise asymptotics can emerge much more slowly for heavier-tailed factors or smaller separation.  With Laplace factors and $\gamma=1$, the median error is $0.525$ at $n=500$ even though the RMSE is $41.8$, reflecting a few extreme roots; by $n=10000$ the median error is $0.089$, the RMSE is $0.113$, and Wald coverage is $93.9\%$.  With fixed $\gamma=0.5$, convergence is slower still: at $n=50000$, coverage is $92.0\%$ for Rademacher factors and $93.8\%$ for Laplace factors.  Thus \Cref{thm:theta-clt} is pointwise rather than uniform as the two slopes approach one another.

\paragraph{Weak separation.}
Set $\gamma_n=hn^{-1/4}$ with $h\in\{0,0.5,1,2,4,6,8\}$, $n\in\{500,1000,2500,5000,10000\}$, $\rho=0.5$, and $\sigma_\varepsilon=1$.  \Cref{fig:local-power} shows the nominal $5\%$ determinant test.  For Rademacher factors, rejection at $h=0$ stays between $0.036$ and $0.052$ and at $h=4$ between $0.646$ and $0.780$ across the full range of $n$; the curves are nearly invariant to sample size, as predicted by \Cref{prop:local-n14}.  Laplace factors enter this local regime more slowly.  At the same time the individual roots remain weakly identified: at $h=0$ and $n=10000$, the median matched root error is $1.83$ for Rademacher factors and $1.09$ for Laplace factors and does not decrease toward zero.  Detection of multiplicity and estimation of the individual roots are therefore distinct problems.

\begin{figure}[htbp]
\centering
\includegraphics[width=0.8\textwidth]{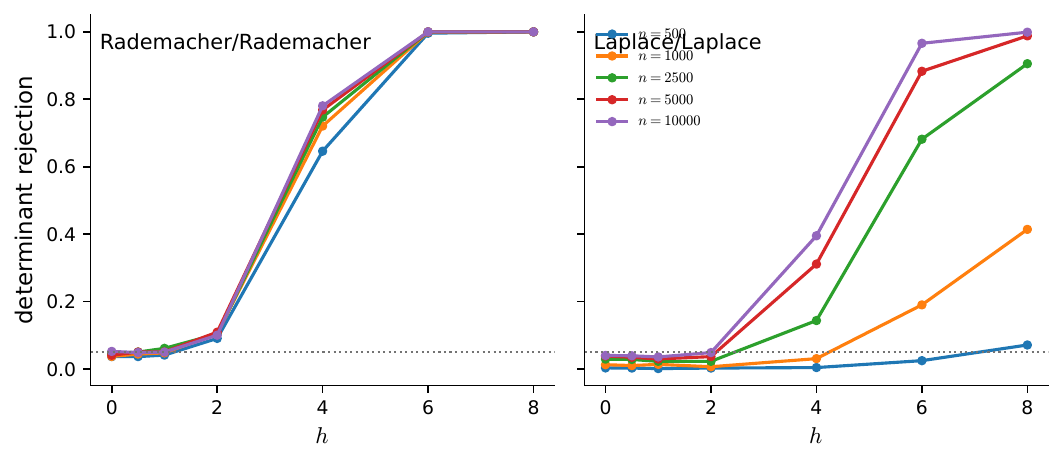}
\caption{Determinant rejection against $h$ in $\gamma_n=hn^{-1/4}$, at $\rho=0.5$ and $\sigma_\varepsilon=1$.  The dotted line is the nominal $5\%$ level.}
\label{fig:local-power}
\end{figure}

\paragraph{Weak fourth-order information.}
To vary fourth-order visibility directly, let
\[
U_t=\sqrt{1-t}\,L+\sqrt t\,R,
\qquad
\kappa_4(U_t)=3(1-t)^2-2t^2,
\]
where $L$ is standardized Laplace, $R$ is Rademacher, and $L\perp R$.  The fourth cumulant vanishes at $t_0=3-\sqrt6$.  In the design with one factor equal to $U_t$, $\gamma=0.5$, $\rho=0.5$, $\sigma_\varepsilon=1$, and $n=5000$, the population condition number of the Hankel matrix exceeds $10^{16}$ at $t_0$, but the median sample condition number is only $103$.  The real-root probability falls to $86.0\%$ and the median matched error rises to $1.01$.  Thus loss of fourth-order identification is visible statistically even when the realized sample Hankel matrix does not appear numerically singular.  This also reinforces that a numerical condition number is not a substitute for an identification diagnostic.

\paragraph{Detection without identification in multiple regression.}
We next use the multiple-regression setting of \Cref{sec:multivariate}.  The baseline design has $p=3$, $\beta=(1,-0.5,0.5)^\top$, Gaussian noise, and non-diagonal within-block mixing.  The $q=1$ Rademacher design is deliberately noninjective: the population recurrence operator $\mathcal L_Z$ has rank $4$ for the $9$ coordinates of $(s,Q)$, so the minimum-norm estimator of $\{\beta,\eta\}$ is not consistent.  Nevertheless, the quadratic statistic $T_{Q,n}$ is well calibrated: under $H_0$, its covariance rank is $9$ in every Rademacher null cell, its nominal $5\%$ rejection probability is $0.049$ at both $n=5000$ and $n=10000$, and at $n=10000$ its Monte Carlo mean and variance are $9.03$ and $17.76$, close to the $\chi^2_9$ values $9$ and $18$.

\Cref{fig:multiv_detect_without_id} gives the central comparison.  At $\|\eta-\beta\|=0.5$, TEST Q rejection increases from $0.216$ at $n=500$ to $1.000$ at $n=10000$, while the median matched error of the minimum-norm coefficient-pair estimator remains essentially unchanged, from $4.19$ to $4.16$.  At separation $1.5$, the rejection probability is already $1.000$ at $n=500$, whereas the matched estimation error remains about $4.95$.  Thus coefficient-vector multiplicity can be consistently detected even though the vectors themselves are not identified.

\begin{figure}[htbp]
\centering
\includegraphics[width=0.85\textwidth]{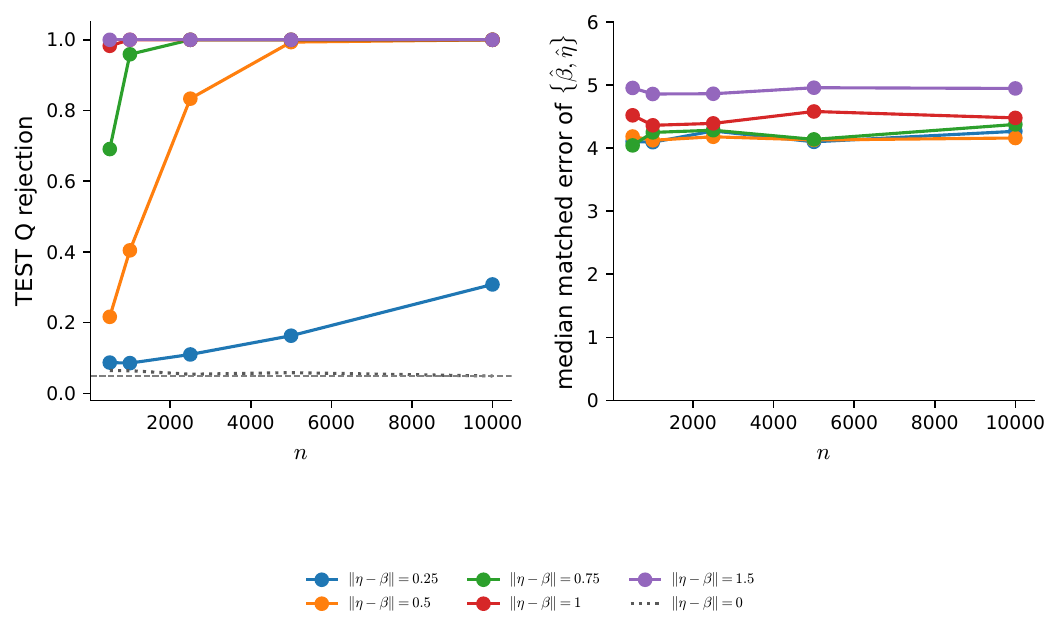}
\caption{Detection without identification in the noninjective $p=3$, $q=1$ Rademacher design.  Left: rejection probability of the quadratic recurrence test at nominal level $5\%$.  Right: median matched error of the minimum-norm estimator of $\{\beta,\eta\}$.  The test becomes powerful while the estimation error does not decrease with $n$.}
\label{fig:multiv_detect_without_id}
\end{figure}

The multivariate local experiment also supports the quadratic scale in \Cref{prop:multiv_local_power}.  Along $\eta_n-\beta=hn^{-1/4}d$, Rademacher rejection curves are nearly invariant to $n$: at $h=2$ rejection lies between $0.136$ and $0.172$, at $h=4$ between $0.856$ and $0.989$, and it is essentially one for $h\ge6$.  Laplace factors again approach the local limit more slowly.

Finally, \Cref{fig:multiv_cancellation} shows that the quadratic recurrence test is not merely a less efficient version of the first-order diagnostic.  In the balanced diagonal design of \Cref{ex:multiv_cancellation}, $C(b)=0$ for every separation while $M_2(b)\neq0$ whenever $\eta\neq\beta$.  At fixed separation $\|\eta-\beta\|=1$, the test based on the linear restriction $C(b)$, referred to as TEST L, rejects in only $2.1\%$ of samples at $n=1000$ and $1.6\%$ at $n=10000$, whereas TEST Q rejects in $94.3\%$ and $100\%$, respectively.

\begin{figure}[htbp]
\centering
\includegraphics[width=0.8\textwidth]{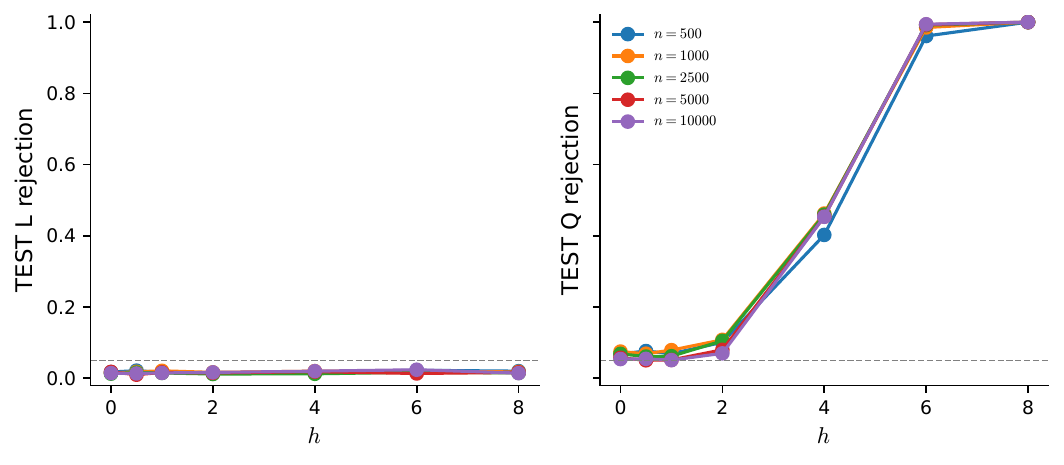}
\caption{First-order cancellation in the balanced diagonal Rademacher design.  The linear diagnostic TEST L is population-blind, while the quadratic recurrence diagnostic TEST Q remains informative.  The dotted line is the nominal $5\%$ level.}
\label{fig:multiv_cancellation}
\end{figure}

Taken together, the simulations support the distinction that drives the theory: higher-order slope multiplicity can become detectable before the individual slopes are stably estimable, and in multiple regression it can remain detectable even when the individual coefficient vectors are not identified.

\section{Discussion}
\label{sec:discussion}
Least squares answers a second-order question: which linear predictor is optimal under squared error, or equivalently which slope is encoded by the covariance of $(X,Y)$. That question remains important, but it is not the only linear information contained in the joint distribution. In latent linear systems, higher-order mixed cumulants form finite exponential sequences whose nodes are coefficient values attached to latent components of the regressor. The resulting latent slope spectrum separates structure that covariance necessarily averages together.

The scalar fourth-order case makes this distinction especially transparent. Four fourth-order mixed cumulants, a $2\times2$ linear system, and one quadratic equation are enough to recover two effective slopes under regular identification. The construction is explicit and avoids direct low-rank tensor approximation, but its limitations are equally transparent. A component with zero fourth cumulant is invisible at fourth order, nearly coincident slopes create weak identification, and the observed law does not by itself label the two endpoints as ``direct'' and ``confounded.'' These are properties of the information available in the distribution rather than defects of a particular algorithm.

The theory and experiments separate three statistical questions. \emph{Identification} asks whether the population cumulant sequence determines a latent spectrum. \emph{Detectability} asks whether the data contain enough signal to reject a one-slope representation. \emph{Estimation} asks whether the individual slopes can be recovered stably. These questions have different answers near degeneracy. Under $\gamma_n=hn^{-1/4}$, the determinant remains on a nontrivial local scale, so multiplicity can be detectable even though the individual roots are not uniformly regularly estimable. Likewise, a cumulant order can be algebraically blind to a component even when another order is informative. This is why the multi-order formulation is useful primarily as a way to repair weak or absent information at one order rather than as an automatic precision improvement.

The numerical results also show the finite-sample price of using higher-order information. For two well-separated slopes, ordinary root-$n$ inference emerges cleanly for bounded latent factors and more slowly for heavier-tailed ones; root errors can have very long right tails, so medians and real-root probabilities are as important as RMSE. Near a vanishing fourth cumulant, statistical identification deteriorates even when the realized sample Hankel matrix does not look numerically singular.

The multivariate results sharpen the separation between detectability and identification.  The quadratic recurrence restrictions can test a one-vector representation without injectivity of $\mathcal L_Z$.  In the noninjective $p=3$ experiment, the rejection probability of TEST Q approaches one under visible fixed alternatives even though the minimum-norm estimator of $\{\beta,\eta\}$ does not improve with $n$.  The balanced cancellation design is stronger still: the first-order fourth-cumulant diagnostic vanishes identically, while the quadratic recurrence remains informative.  This also distinguishes the recurrence test from first-order higher-moment residual diagnostics rather than merely providing a less efficient version of them.  Thus higher-order slope multiplicity can be detectable not only before stable estimation, as in the scalar weak-identification problem, but even when the individual multivariate coefficient vectors are not identified at all.

Two qualifications deserve emphasis. First, multiple effective roots or coefficient vectors in cumulant restrictions do not by themselves establish a particular latent causal mechanism; nonlinear or other forms of misspecification can also create departures from a one-slope representation, while fourth-order cancellation can hide some latent alternatives. Second, injectivity remains essential for recovery of the coefficient pair in \Cref{thm:multiv_identification}.  The testing results do not remove this identification limitation; rather, they show that scientifically meaningful restrictions survive it.  Characterizing primitive conditions for multivariate injectivity therefore remains a separate problem.

These results also clarify the role of the proposed spectrum relative to ordinary regression. If prediction under squared loss is the objective, the least-squares coefficient has an unambiguous role. If the scientific question concerns heterogeneous linear relationships carried by latent components of an observed regressor, however, a single covariance-based coefficient may deliberately average over structure that is itself of interest. The latent slope spectrum provides one population summary of that heterogeneity under the low-complexity model considered here; it is complementary to, rather than a replacement for, the least-squares projection.

A second implication is that nonidentification of the individual latent parameters need not exhaust the inferential content of the observed law. Both the scalar and multivariate results exhibit intermediate regimes in which a one-slope description can be rejected even though the individual slopes are not regularly estimable, or are not identified at all. Thus tests of slope multiplicity can remain informative in settings where parameter recovery is impossible, provided the relevant higher-order signal is visible.

Interpretation nevertheless requires care. The saturation results show that rich algebraic representations are easy to construct, and nonlinear or heteroskedastic misspecification can also violate one-slope restrictions. We therefore regard the cumulant restrictions first as observational statements about the joint law. Structural or causal labels require assumptions or external information that are justified independently of the algebra. This distinction is central to the partial-identification results in the scalar model and to the interpretation of the multivariate tests.

Finally, the gain in structural resolution comes with a statistical cost. Higher-order cumulants can recover features that second moments erase, but heavier tails slow convergence, weak cumulants create near-degeneracy, and individual roots can have long-tailed sampling distributions. The usefulness of information beyond covariance therefore depends jointly on the scientific target and on the strength and stability of the available higher-order signal.

\newpage
\appendix

\section{Proofs for the scalar population results}

\begin{proof}[Proof of \Cref{thm:finite_exponential}]
Expand each occurrence of $X$ and $Y$ in the joint cumulant using multilinearity. Every resulting term is a joint cumulant of $m$ arguments drawn from $U_1,\ldots,U_r,\varepsilon$. By finite-order independence, any term involving at least two distinct groups vanishes. Because $k\le m-1$, at least one argument in $z_k^{(m)}$ is an $X$, so a term involving only $\varepsilon$ is impossible. The only surviving terms are those in which all $m$ arguments arise from a common $U_j$. Such a term contributes $\kappa_m(U_j)\theta_j^k$. Summing over $j$ proves \eqref{eq:finite_exponential}.
\end{proof}

\begin{proof}[Proof of \Cref{prop:order_m_saturation}]
The first $m$ mixed-cumulant coordinates satisfy a square Vandermonde system with nodes $\vartheta_1,\ldots,\vartheta_m$. Because the nodes are distinct, the Vandermonde matrix is nonsingular, so the weights $a_1^{(m)},\ldots,a_m^{(m)}$ exist and are unique. The remaining symmetric coordinate is the pure-$Y$ cumulant; choosing $b_m$ as stated matches that coordinate and completes the tensor representation.
\end{proof}

\begin{proof}[Proof of \Cref{cor:common_slope_dictionary}]
For each $\ell=2,\ldots,m$, the $\ell\times m$ Vandermonde matrix with entries $\vartheta_j^k$, $k=0,\ldots,\ell-1$, has full row rank because the $\vartheta_j$ are distinct. Hence weights solving the $\ell$ mixed-cumulant equations exist, although they need not be unique when $\ell<m$. Choosing $b_\ell$ to match the pure-$Y$ cumulant completes the tensor representation.
\end{proof}

\begin{proof}[Proof of \Cref{thm:prony-identification}]
Let $V$ be the $r\times r$ Vandermonde matrix with entries $V_{ij}=\theta_j^{i-1}$. Then
\begin{equation*}
\mathcal H_r(z)=V\diag(a_1,\ldots,a_r)V^\trans.
\end{equation*}
The distinctness of the $\theta_j$ and nonzero weights imply that $V$ and $\diag(a_1,\ldots,a_r)$ are nonsingular, hence $\mathcal H_r(z)$ is nonsingular.

Let $c$ solve \eqref{eq:annihilator-system}, and define $p$ by \eqref{eq:annihilator-poly}. Substituting $z_k=\sum_ja_j\theta_j^k$ into the $r$ recurrence equations gives
\[
\sum_{j=1}^ra_j\theta_j^i p(\theta_j)=0,
\qquad i=0,\ldots,r-1.
\]
In matrix form,
\[
V\diag(a_1,\ldots,a_r)
\begin{pmatrix}p(\theta_1)\\\vdots\\p(\theta_r)\end{pmatrix}=0.
\]
The first two factors are nonsingular, so $p(\theta_j)=0$ for every $j$. Since $p$ has degree $r$ and the $\theta_j$ are distinct, these are exactly its roots. The first $r$ equations in \eqref{eq:finite_exponential} then form a nonsingular Vandermonde system for the weights.
\end{proof}

\begin{proof}[Proof of \Cref{cor:capacity}]
If $r\le\lfloor m/2\rfloor$, then $m\ge2r$, so the coordinates $z_0^{(m)},\ldots,z_{2r-1}^{(m)}$ required by \Cref{thm:prony-identification} are available. The omitted-variable statement follows by substituting $r=q+1$.
\end{proof}

\begin{proof}[Proof of \Cref{cor:fourth-order}]
For $r=2$,
\[
H(z)=
\begin{pmatrix}1&1\\\theta_1&\theta_2\end{pmatrix}
\begin{pmatrix}a&0\\0&b\end{pmatrix}
\begin{pmatrix}1&\theta_1\\1&\theta_2\end{pmatrix}.
\]
Taking determinants gives \eqref{eq:det-h}. The annihilating polynomial is
\[
(t-\theta_1)(t-\theta_2)=t^2-(\theta_1+\theta_2)t+\theta_1\theta_2.
\]
Comparing with $t^2-c_1t-c_0$ gives \eqref{eq:symmetric-roots} and \eqref{eq:theta-pm}.
\end{proof}

\begin{proof}[Proof of \Cref{prop:ols-barycenter}]
Pairwise uncorrelatedness gives
\[
\Var(X)=\sum_{j=1}^r\sigma_j^2,
\qquad
\Cov(X,Y)=\sum_{j=1}^r\theta_j\sigma_j^2.
\]
Division gives \eqref{eq:ols-general-barycenter}. The weights are nonnegative and sum to one, implying \eqref{eq:convex-hull}.
\end{proof}

\begin{proof}[Proof of \Cref{prop:ovb-set}]
By \Cref{cor:fourth-order}, the fourth-order cumulants identify the unordered pair $\{\beta,\beta+\gamma\}$. Exchanging $H$ and $\nu$ leaves the observed equations unchanged after replacing $\beta$ by $\beta+\gamma$ and $\gamma$ by $-\gamma$. Hence the labels cannot be distinguished from the distribution of $(X,Y)$ alone. The remaining conclusions are immediate.
\end{proof}

\section{Proofs for fourth-order inference}

\begin{proof}[Derivation of \eqref{eq:z-clt}]
Let $g_4(X,Y)$ be the finite vector of monomials $X^aY^b$ with $a+b\le4$. Under \eqref{eq:eighth-moment}, every coordinate of $g_4$ has finite second moment. Hence the multivariate central limit theorem gives
\[
\sqrt n(\widehat M_4-M_4)\overset{d}{\longrightarrow}N(0,\Sigma_M).
\]
Each fourth-order joint cumulant is a polynomial in raw moments of total degree at most four, so $\phi$ is continuously differentiable. The multivariate delta method yields \eqref{eq:z-clt}--\eqref{eq:omega}. Consistency of the plug-in covariance follows from the law of large numbers applied to moments through degree eight.
\end{proof}

\begin{proof}[Derivation of \eqref{eq:jac-c}]
Write $c=H^{-1}q$. Differentiation gives
\[
\dd c=H^{-1}(\dd q-\dd H\,c).
\]
For perturbations of $(z_0,z_1,z_2,z_3)$ in the coordinate directions, the four vectors inside parentheses are respectively
\[
\binom{-c_0}{0},\quad
\binom{-c_1}{-c_0},\quad
\binom{1}{-c_1},\quad
\binom{0}{1}.
\]
Stacking them proves \eqref{eq:jac-c}.
\end{proof}

\begin{proof}[Proof of \Cref{thm:theta-clt}]
Under the stated conditions, the maps $z\mapsto c=H(z)^{-1}q(z)$ and $c\mapsto(\theta_-,\theta_+)$ are continuously differentiable near the truth. Their derivatives are \eqref{eq:jac-c} and \eqref{eq:jac-root}. Apply the delta method to \eqref{eq:z-clt} and compose the Jacobians. Plug-in consistency follows by continuity.
\end{proof}

\begin{proof}[Proof of \Cref{prop:h-test}]
The function $h(z)=z_0z_2-z_1^2$ is continuously differentiable with gradient in \eqref{eq:tau-h}. Under the one-slope null, $h(z)=0$. The result follows from the delta method applied to \eqref{eq:z-clt}.
\end{proof}

\begin{proof}[Derivation of \eqref{eq:tau-gaussian}]
Under $Y=\beta X+\varepsilon$ with $\varepsilon$ independent of $X$ and both symmetric, the plug-in cumulants are polynomials in even moments. Differentiating $h$ at the rank-one sequence $z_k=\kappa_4(X)\beta^k$ and using independence to drop odd cross terms leaves the influence function
\[
\kappa_4(X)\,(X^2-\E X^2)\,(\varepsilon^2-\E\varepsilon^2).
\]
The two factors are independent, so
\[
\tau_h^2=\kappa_4(X)^2\Var(X^2)\Var(\varepsilon^2).
\]
If $\Var(X)=1$, then $\Var(X^2)=\E[X^4]-1$. If $\varepsilon\sim N(0,\sigma_\varepsilon^2)$, then $\Var(\varepsilon^2)=2\sigma_\varepsilon^4$.
\end{proof}

\begin{proof}[Proof of \Cref{prop:local-n14}]
Under the two-slope model,
\[
h(z_n)=a_nb_n\delta_n^2=a_nb_nt^2n^{-1/2},
\]
so $\sqrt n\,h(z_n)\to abt^2$. By the assumed uniform cumulant CLT and delta expansion around $z_n$,
\[
\sqrt n\{\widehat h-h(z_n)\}\overset{d}{\longrightarrow}N(0,\tau_0^2).
\]
Adding the deterministic local shift proves \eqref{eq:local-h}.
\end{proof}

\begin{proof}[Proof of \Cref{prop:multiorder-gmm}]
Let $J=D_cR(c_0)$ and let $\Sigma_R$ denote the asymptotic covariance matrix of the stacked sample recurrence residuals at $c_0$. The assumed joint root-$n$ CLT for the relevant sample cumulants implies
\[
\sqrt n\,\widehat R(c_0)\overset{d}{\longrightarrow}N(0,\Sigma_R).
\]
Each recurrence residual is affine in $c$, with coefficients given by sample cumulants. Hence the sample Jacobian $\widehat J=D_c\widehat R(c)$ does not depend on $c$ and satisfies $\widehat J\overset{p}{\to}J$. The first-order condition for the quadratic minimum-distance criterion is therefore
\[
0
=
\widehat J^\trans W_n\{\widehat R(c_0)+\widehat J(\widehat c-c_0)\}.
\]
Because $J$ has full column rank and $W$ is positive definite, $J^\trans WJ$ is nonsingular, and the corresponding sample matrix is nonsingular with probability tending to one. Thus
\[
\sqrt n(\widehat c-c_0)
=
-(\widehat J^\trans W_n\widehat J)^{-1}
\widehat J^\trans W_n\sqrt n\,\widehat R(c_0),
\]
which yields root-$n$ consistency and, by Slutsky's theorem,
\[
\sqrt n(\widehat c-c_0)
\overset{d}{\longrightarrow}
N(0,V_c),
\qquad
V_c
=
(J^\trans WJ)^{-1}J^\trans W\Sigma_R WJ(J^\trans WJ)^{-1}.
\]
If $\Sigma_R$ is nonsingular, the efficient choice $W=\Sigma_R^{-1}$ reduces this to
\[
V_c=(J^\trans\Sigma_R^{-1}J)^{-1},
\]
the usual efficient minimum-distance covariance within this class. Finally, if the roots of $p_{c_0}$ are simple, each root is a continuously differentiable local function of $c$ by the implicit function theorem. Applying the delta method to these root maps gives root-$n$ asymptotic normality of the estimated latent slopes.
\end{proof}

\section{Proofs for the multivariate results}

\begin{proof}[Proof of \Cref{thm:multiv_recurrence}]
Let $K_U=\cum_m(U)$ and $K_V=\cum_m(V)$ be the symmetric order-$m$ cumulant tensors of the two vector components. By multilinearity and finite-order independence,
\[
Z_k=K_U\times\eta^{\otimes k}+K_V\times\beta^{\otimes k},
\]
where $k$ modes are contracted and $m-k$ remain free. Contracting $Z_{k+1}$ with $s=\beta+\eta$ yields a sum of pure terms and cross terms. Because $K_U$ and $K_V$ are symmetric,
\[
Z_k:Q
=K_U\times\eta^{\otimes k}\times\beta\times\eta
+K_V\times\beta^{\otimes k}\times\beta\times\eta.
\]
Subtracting cancels the cross terms and leaves
\[
K_U\times\eta^{\otimes(k+2)}+K_V\times\beta^{\otimes(k+2)}=Z_{k+2},
\]
proving \eqref{eq:multiv_recurrence}.
\end{proof}

\begin{proof}[Proof of \Cref{prop:multiv_testing_restrictions}]
From \eqref{eq:multiv_Z_representation} with $m=4$,
\[
Z_{k+1}-Z_k\times b
=
K_U\times\eta^{\otimes k}\times(\eta-b)
+K_V\times\beta^{\otimes k}\times(\beta-b).
\]
Using \eqref{eq:multiv_centered_slopes} with $k=0$ gives \eqref{eq:multiv_C_signal}. Applying the same identity once more gives
\[
M_2(b)
=K_U[(\eta-b),(\eta-b)] + K_V[(\beta-b),(\beta-b)],
\]
and
\[
M_3(b)
=K_U[\eta,(\eta-b),(\eta-b)] + K_V[\beta,(\beta-b),(\beta-b)].
\]
Substituting \eqref{eq:multiv_centered_slopes} yields \eqref{eq:multiv_M2_signal}--\eqref{eq:multiv_M3_signal}.
\end{proof}

\begin{proof}[Proof of \Cref{thm:multiv_wald_test}]
Let $G_4(X,Y)$ collect all monomials in $(X^\top,Y)^\top$ of total degree at most four.  The eighth-moment assumption gives a multivariate CLT for the empirical mean of $G_4$.  The map from these raw moments to $\widehat b$, the empirical cumulant tensors $(\widehat Z_0,\ldots,\widehat Z_3)$, and hence $\widehat\psi_Q$, is continuously differentiable in a neighborhood of the null whenever $\Sigma_X$ is nonsingular.  The multivariate delta method therefore gives \eqref{eq:multiv_psi_clt}, and plug-in estimation yields a consistent $\widehat\Omega_Q$.

Under locally constant rank, the standardized projection of $\sqrt n\widehat\psi_Q$ onto the column space of $\Omega_Q$ converges to an $r_Q$-dimensional standard normal vector.  The quadratic form in \eqref{eq:multiv_Q_stat} therefore converges to $\chi^2_{r_Q}$.  Under a fixed alternative $P$, $\widehat\psi_Q\xrightarrow{p}\psi_Q(P)$ and $\widehat\Omega_Q^\dagger\xrightarrow{p}\Omega_Q(P)^\dagger$ under the stated stable-rank condition. Hence
\[
\frac{T_{Q,n}}{n}
\xrightarrow{p}
\psi_Q(P)^\top\Omega_Q(P)^\dagger\psi_Q(P).
\]
Condition \eqref{eq:multiv_visible_alt} therefore implies divergence of $T_{Q,n}$ and proves consistency against such alternatives.
\end{proof}


\begin{proof}[Proof of \Cref{prop:multiv_local_power}]
Write $\delta_n=h n^{-1/4}$ and use \eqref{eq:multiv_M2_signal}.  Since $(I-A_n)\delta_n=n^{-1/4}u+o(n^{-1/4})$ and $A_n\delta_n=n^{-1/4}v+o(n^{-1/4})$,
\[
\sqrt n\,M_{2,n}\to K_U[u,u]+K_V[v,v].
\]
For $M_{3,n}$, replace $\eta_n$ and $\beta_n$ by their common limit $\theta$ in the leading quadratic term.  The additional term generated by $\eta_n-\theta=O(n^{-1/4})$ is cubic in $\delta_n$ and is $o(n^{-1/2})$, which gives \eqref{eq:multiv_local_M3}.  Combining these deterministic shifts with the assumed uniform CLT and delta expansion gives \eqref{eq:multiv_local_normal}.  The noncentral chi-square limit follows by the same projection argument as in \Cref{thm:multiv_wald_test}.
\end{proof}

\begin{proof}[Proof of \Cref{cor:multiv_ovb}]
Because $\eta=\beta+\Delta\gamma$ and $\Delta^\top\Delta=I_q$, we have $\eta=\beta$ if and only if $\gamma=0$.  Moreover,
\[
\Cov(X,Y)=\Sigma_X\beta+\Delta\Sigma_H\gamma,
\]
so multiplication by $\Sigma_X^{-1}$ gives
$$
\beta_{\LS}-\beta=\Sigma_X^{-1}\Delta\Sigma_H\gamma.
$$ 
If $\beta_{\LS}=\beta$, then $\Delta\Sigma_H\gamma=0$; premultiplying by $\Delta^\top$ and using $\Sigma_H\succ0$ gives $\gamma=0$.  The converse is immediate.
\end{proof}

\begin{proof}[Proof of \Cref{thm:multiv_identification}]
Injectivity of $\mathcal L_Z$ implies that \eqref{eq:multiv_rec_1}--\eqref{eq:multiv_rec_2} have at most one solution $(s,Q)$; existence follows from \Cref{thm:multiv_recurrence}. Thus $(s,Q)$ is identified. Direct expansion gives
\[
ss^\top-4Q
=(\beta+\eta)(\beta+\eta)^\top
-2(\beta\eta^\top+\eta\beta^\top)
=(\beta-\eta)(\beta-\eta)^\top.
\]
If $\beta\neq\eta$, the matrix is rank-one positive semidefinite. Any factor $d$ satisfying \eqref{eq:multiv_d_factor} equals either $\beta-\eta$ or $\eta-\beta$, which gives \eqref{eq:multiv_beta_eta_recovery}.
\end{proof}

\end{document}